\documentclass[letterpaper,twocolumn,aps,prl,superscriptaddress,floatfix]{revtex4-2}
\usepackage{float}
\usepackage{graphicx}
\usepackage{dcolumn}
\usepackage{bm}

\usepackage{ragged2e}
\usepackage{listings}
\usepackage{graphicx} 
\usepackage{mathtools}
\usepackage{enumitem}
\usepackage{amsfonts} 
\usepackage{indentfirst}
\usepackage[pdfstartview=XYZ,
bookmarks=true,
colorlinks=true,
linkcolor=blue,
urlcolor=blue,
citecolor=blue,
bookmarks=true,
linktocpage=true, 
hyperindex=true
]{hyperref}
\usepackage{orcidlink}
\usepackage{comment} 
\usepackage{physics}
\usepackage{caption}
\usepackage{subcaption}

\usepackage{amsmath,bm}
\usepackage{amssymb}
\usepackage{upgreek}
\usepackage{xcolor}
\usepackage{color,soul}
\usepackage[utf8]{inputenc}
\usepackage{amsthm}
\usepackage[ruled,vlined]{algorithm2e} 
\usepackage{dsfont}
\newcommand{\m}{\textemdash}

\newcommand{\A}{\mathcal{A}}
\newcommand{\F}{\mathcal{F}}

\newcommand{\X}{\mathcal{X}}
\newcommand{\Ss}{\mathcal{S}}
\newcommand{\argmax}{\operatorname*{argmax}}

\newcommand{\E}{\mathbb{E}}

\usepackage{caption}
\def\gbm#1{{\let\pi\uppi \let\phi\upphi \let\lambda\uplambda \let\mu\upmu \let\rho\uprho \let\sigma\upsigma \let\tau\uptau \let\theta\uptheta \let\eta\upeta \bm{#1}}}
\newcommand{\RC}[1]{\textcolor{blue}{[RC: #1]}}
 \newcommand{\hquad}{\hspace{0.5em}}

 \usepackage[english]{babel}

\newtheorem{theorem}{Theorem}
\newcommand{\D}{\text{D}}
\newtheorem{lemma}{Lemma}
\newtheorem{prop}{Proposition}
\newtheorem{corollary}{Corollary}[theorem]

\begin{document}

\title{Time-ordered free energy in correlated quantum systems: An agentic approach}

\author{Ruo Cheng Huang\,\orcidlink{0000-0001-8135-8693} }
\email{ruocheng.huang@ntu.edu.sg}
\affiliation{Nanyang Quantum Hub, School of Physical and Mathematical Sciences, Nanyang Technological University, Singapore
}
\affiliation{Centre for Quantum Technologies, Nanyang Technological University, Singapore}

\author{Isha Singh Le Xue}
\email{isha0005@e.ntu.edu.sg}
\affiliation{Nanyang Quantum Hub, School of Physical and Mathematical Sciences, Nanyang Technological University, Singapore
}
\affiliation{Centre for Quantum Technologies, Nanyang Technological University, Singapore}

\author{Yuxuan Qu}
 \email{quyu0001@e.ntu.edu.sg}
\affiliation{Nanyang Quantum Hub, School of Physical and Mathematical Sciences, Nanyang Technological University, Singapore
}
\affiliation{Centre for Quantum Technologies, Nanyang Technological University, Singapore}

\author{Paul M.\ Riechers\,\orcidlink{0000-0002-0135-3778}}
\email{pmriechers@gmail.com}
\affiliation{Beyond Institute for Theoretical Science (BITS), San Francisco, CA, USA}

\author{Varun Narasimhachar}
\email{varun.achar@gmail.com}
\affiliation{Institute of Advanced Intelligence and Computing (IAIC), Agency for Science, Technology and Research, 1 Fusionopolis Way, Republic of Singapore 138632}

\author{Mile Gu\,\orcidlink{0000-0002-5459-4313}}
\email{gumile@ntu.edu.sg}
\affiliation{Nanyang Quantum Hub, School of Physical and Mathematical Sciences, Nanyang Technological University, Singapore
}
\affiliation{Centre for Quantum Technologies, Nanyang Technological University, Singapore}
\affiliation{MajuLab, CNRS-UNS-NUS-NTU International Joint Research Unit, UMI 3654, 117543, Singapore}

\date{\today}

\begin{abstract}
How much work can an agent extract from a temporal sequence of quantum states when it can only operate online under causal constraints---deciding which energy extraction method to use with knowledge of what it has observed before? Here, we study this problem in the context of quantum state sequences that are potentially non-Markovian---generated by some underlying hidden Markov machine that the agent cannot observe. Using techniques from dynamic programming and computational mechanics, we present a method to identify the provably optimal agent strategy, with time complexity that scales linearly with sequence length. This motivates us to introduce the maximum work such agents can extract---\emph{time-ordered free energy} (TOFE)---as a fundamental measure of free energy available in a temporally correlated quantum system subject to causal considerations. 
\end{abstract}

\maketitle



\noindent \textit{Introduction} \m 
While free energy is typically discussed in the context of a state out of equilibrium at a fixed time, it also arises in temporal sequences of states. In such settings, extracting this free energy as useful work typically requires an agent that remembers past states and adapts its future actions accordingly. Consider an agent foraging in an environment that emits a sequence of bits alternating between $0$ and $1$. Each system appears maximally mixed to an agent with no memory and thus contains no free energy. However, an agent that tracks the previous bit can extract up to $k_\mathrm{B} T \ln 2$ of free energy at ambient temperature $T$~\cite{szilard1964decrease}. Classically, these ideas motivated the development of information ratchets. Operationally, they model an organism that is foraging for food or an edge device seeking a power source~\cite{still2012thermodynamics, kolchinsky2025maximizing,lloyd2025thermodynamics+}. Conceptually, they underpin our understanding of free energy in temporal correlations and the thermodynamics of learning and prediction~\cite{mandal2012work,boyd2016identifying,boyd2017leveraging,garner2017thermodynamics,boyd2018thermodynamics,garner2021fundamental,boyd2025thermodynamic}.

How do such ideas extend to quantum environments? Here, we consider an agent with access to a stream of $L$ quantum systems. The systems are temporally correlated in that they are generated by a hidden Markov model (HMM) whose states are hidden from the agent. The theoretically optimal strategy is to retain all systems and interact coherently with them at once~\cite{allahverdyan2004maximal,aaberg2013truly,brandao2013resource,horodecki2013fundamental,skrzypczyk2014work,Safranek23_Work}. However, such strategies require quantum memory that scales without bound with the length of the data stream. This conflicts with the perspective of an autonomous machine harvesting energy online and presents an immense practical challenge with current technology~\cite{lvovsky2009optical,heshami2016quantum}.

Consequently, we consider online agents bound by an \emph{arrow of time}: they must process the sequence of quantum states and extract energy sequentially, with actions at time $t$ depending only on past observations, thereby forbidding postponement of extraction until the end of the protocol. We ask \emph{how much energy an agent can harvest under such causality constraints and how this maximum is achieved}. This leads us to the \emph{time-ordered free energy} (TOFE) of a sequence of quantum states, defined as the maximum free energy available to a causally constrained agent. 

The TOFE may at first appear intractable to compute. Unlike classical bits, which can only be measured in a single basis, quantum systems allow many ways to harvest energy. Thus, an agent's possible action sequences scale exponentially with $L$, even when limited to finitely many actions per time step. Remarkably, we develop a dynamic programming (DP) algorithm that optimizes agent actions in time linear in $L$ for any fixed set of actions. We use this to answer a previously open question in sequential energy harvesting from quantum systems~\cite{huang2023engines}, showing that the best strategy is not necessarily to maximize the energy gain at each time step.

\noindent\textit{Framework}\m
Consider a source that emits a quantum system $Q_t$ in state $\sigma_{Q_{t}}^{(x_{t})}$ at the discrete time step $t$, where each state is identified by a label $x \in \mathcal{X}$. We model this source as a hidden Markov model (HMM), denoted by $\mathcal{M}=(\mathcal{S},(\sigma^{(x)})_{x\in\X},(\mathsf{T}^{(x)})_{x\in\X},\mu_{0})$, where $\mathcal{S}$ represents the set of latent states and $\mu_0$ defines their initial distribution. The dynamics of the system are governed by the elements of the transition matrix, $\mathsf{T}^{(x)}_{s,s'} = \Pr(S_t = s', X_t = x \mid S_{t-1} = s)$, which dictate the probability of transitioning from a latent state $s$ to $s'$ while simultaneously emitting the quantum state $\sigma^{(x)}$. Over a sequence of $L$ time steps, the process emits the multi-time quantum state $\rho^{(1:L)}$, expressed as:
\begin{equation}
\label{eq:fuel_state}
\rho^{(1:L)} = \sum_{x_{1:L}\in \X^L} \Pr(x_{1:L}) \bigotimes_{t=1}^L \sigma_{Q_t}^{(x_t)}~.
\end{equation}
A simple example of an HMM source, known as the \emph{perturbed coin}, can be found in Fig.~\ref{fig:combined_fig}(a). 

\begin{figure}[htbp]
\includegraphics[width=0.75\columnwidth]{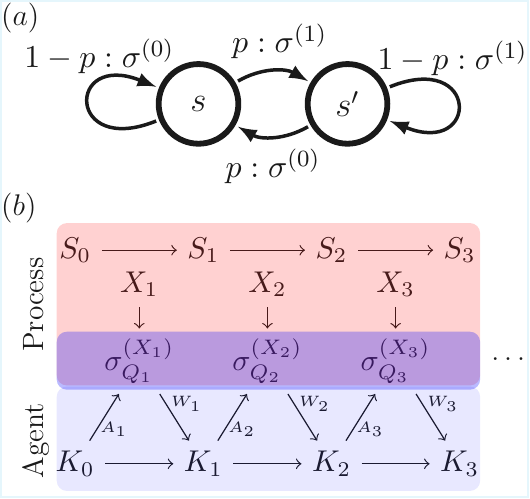}
    \caption{\justifying (a) The perturbed coin process is an example of a quantum hidden state source. It possesses two latent states $s$ and $s'$ (nodes). The directed edges describe its dynamics: an edge from $s$ to $s'$ labeled $p:\sigma$ indicates that a source in state $s$ has probability $p$ to transition to $s'$ and emit state $\sigma$. (b) A causal network depicting the interaction between the hidden quantum process (top, red) and the sequential agent (bottom, blue). The agent does not have access to $s_t$ and thus can only select an extraction method $A_t$ based on its current memory state $K_{t-1}$. The extracted work $W_t$ acts as an observable, allowing the agent to update its memory to $K_t$.}
\label{fig:combined_fig}
\end{figure}

We take on the role of the agent, who is given access to $Q_t$ at each time step $t$. We assume the agent lacks a quantum memory that persists between time steps, so it is restricted to performing local operations sequentially. It must obey \textit{causality}: its actions at any time step $t$ can depend only on actions and observations at times $t' < t$. \textit{For an agent constrained by locality and causality, the maximum expected extractable work} defines the TOFE, $\F_{\text{TO}}^{(L)}$.

We adopt the standard resource-theoretic framework for thermodynamics, in which an agent extracts work from a sequence of quantum systems $Q_t$ and stores it in an ideal battery $B_t$. Modeled as a semiclassical weight, the battery ensures that the extracted \emph{work} is defined as the change in its energy. At each time step $t$, the agent is restricted to thermal operations mediated by a memoryless thermal reservoir $R$, which is initialized in the thermal state $\gamma=Z^{-1}e^{-\beta\mathcal{H}_{R}}$ with partition function $Z=\tr(e^{-\beta\mathcal{H}_{R}})$ at inverse temperature $\beta=(k_BT)^{-1}$. These widely shared assumptions establish fundamental thermodynamic benchmarks~\cite{kosloff2013quantum,brandao2013resource,skrzypczyk2014work,aaberg2014catalytic,gour2015resource,lostaglio2015description,korzekwa2016extraction,lostaglio2017thermodynamic,sparaciari2017resource}. We define the action space as $\mathcal{A}=\{\mathcal{W}_{\rho}\}$, parametrized by target states $\rho$, where $\mathcal{W}_{\rho}$ is tailored to extract all non-equilibrium free energy from $\rho$~\cite{riechers2021initial,huang2023engines,lumbreras2025quantum}. Here, we assume that the Hamiltonian governing $Q_t$ is energy-degenerate, and revisit the non-degenerate case in Appendix~\ref{sec:nondegen}.

The most general means for an agent to harvest work sequentially is via a \emph{causal strategy} $\Gamma$, a systematic rule for choosing which action $a_t \in \mathcal{A}$ to apply based on the prior history $\mathbf{h}_{t-1}=(a_1,w_1,\ldots,a_{t-1},w_{t-1})$. The TOFE over $L$ time steps is then the maximum extracted work when optimized over all causal strategies:
\begin{align}
\label{eq:max_expected_work}
    \F^{(1:L)}_{\text{TO}}(\rho^{(1:L)})\!\coloneqq\!\max_{\Gamma}\E_{\mathbf{H}_L|\Gamma}\left[\sum_{t=1}^LW_t\right]~,
\end{align}
where the expectation is taken over $\mathbf{H}_L|\Gamma$, the random variable governing $\mathbf{h}_L$ when the agent executes $\Gamma$.

Na\"\i vely, executing such an optimal strategy could require the agent to store the entire history $\mathbf{h}_t$. In Appendix \ref{sec:sufficient_statistic}, we show that this is unnecessary. Instead, an agent can store a suitable \emph{belief state} 
\begin{equation} \gbm\eta_t \coloneqq \bigl[ \Pr(S_t=s | \mathbf{h}_{t}) \bigr]_{s\in \mathcal{S}}~, \end{equation}
representing its \emph{subjective} belief of which latent state the source is in, conditioned on its action-observation history. After each round, the agent
updates this belief based on the observed work value using Bayesian inference~\cite{cox1946probability,jaynes2003probability}. Specifically, upon observing a work value $w_t$, the agent's belief state is updated according to the Bayesian update rule $\tau$,
\begin{equation}
\begin{split}
\label{eq:bayesian_update}
\gbm\eta_{t} &= \tau(\gbm\eta_{t-1}, a_t, w_t) \\
&\propto \sum_{x_{t}\in\mathcal{X}} \Pr(W_{t}=w_{t}|x_{t},a_{t},\gbm\eta_{t-1}) \gbm\eta_{t-1} \mathsf{T}^{(x_t)},
\end{split}
\end{equation}
where proportionality ensures normalization~\cite{crutchfield2016exact,huang2023engines} and $\mathsf{T}^{(x_t)}$ is the labeled transition matrix of the HMM. In Appendix~\ref{sec:mem_cost}, we show that updates can always be done near-reversibly, resulting in a negligible contribution to heat dissipation. 

Consequently, the history-dependent policy $\Gamma$ can be replaced by a belief-dependent policy $\Lambda$ that maps any belief state to an allowed action~\cite{aastrom1965optimal,smallwood1973optimal}. In the following, we will refer to $\Lambda$ simply as the \emph{policy}. At time $t$, an agent in the memory state $\gbm\eta_{t-1}$ executes an action $a_t=\Lambda(\gbm\eta_{t-1})\in\A$ to extract free energy from $\sigma_{Q_t}$, storing it in the battery $B_t$. The extracted work $w_t$ corresponds to the measurement result for $B_t$ in the energy eigenbasis~\cite{wigner1952messung,araki1960measurement}. Following this, the agent updates its memory to $\gbm\eta_t=\tau(\gbm\eta_{t-1},a_t,w_t)$. This protocol is illustrated in Fig.~\ref{fig:combined_fig}(b), where the random variables $S_t, X_t, A_t, K_t, W_t$ take the realizations $s_t, x_t, a_t, \gbm\eta_t, w_t$, respectively. Consequently, determining the TOFE and its optimal harvesting strategy reduces to finding the optimal policy $\Lambda$.


\noindent\textit{Policy optimization}\m Recall that the average work extracted by a protocol designed for a state $\rho$ when acting on a state $\sigma$ is given by~\cite{riechers2021initial}:
\begin{equation}
\label{eq:expected_reward_ub}
\beta\langle W_{\text{ext}}\rangle = \bigl[ \D(\sigma_Q\|\gamma_Q)-\D(\sigma_Q\|\rho) \bigr]~.
\end{equation}
Here, $\gamma_Q = Z_Q^{-1}e^{-\beta\mathcal{H}_Q}$ is the thermal state of $Q$ and $\D(\sigma\|\rho) = \tr(\sigma\log\sigma)-\tr(\sigma\log\rho)$ is the quantum relative entropy. The first term in Eq.~\eqref{eq:expected_reward_ub} represents the non-equilibrium free energy available in $\sigma$, while the second represents the dissipated heat due to expectation mismatch that vanishes if $\sigma=\rho$. Since our agent lacks knowledge of $\sigma$, the expected work extracted by taking action $a$ when the agent is in a belief state $\gbm\eta$, $\langle W(\gbm \eta,a)\rangle$, can be expressed as
\begin{equation}
\label{eq:final_rewards}
   \beta \langle W(\gbm \eta,a)\rangle \!\coloneq \D(\xi_{\gbm\eta}\|\gamma_Q) - \D(\xi_{\gbm\eta}\|\rho_a)~,
\end{equation}
where $\xi_{\gbm\eta}\coloneqq \sum_{x\in\X} \Pr(x|\gbm\eta)\sigma^{(x)}$ is the agent's expected quantum state, and $\rho_a$ is the target state of action $a$.

A possible policy is the ``greedy" or local-optimizing (LO) policy, which would choose an action that maximizes the immediate work extracted. Based on Eq.~\eqref{eq:final_rewards}, this LO policy would be tailored to the current expected state, $\rho_a=\xi_{\gbm\eta}$, thereby eliminating the dissipative term and maximizing the immediate expected work~\cite{huang2023engines}. However, strategically sacrificing immediate work by choosing a ``mismatched" protocol may yield greater predictive information for subsequent steps, leading to higher cumulative work extraction. The dissipation term $\D(\xi_{\gbm\eta}\|\rho_a)$ can thus be interpreted as the price of information—a trade-off between maximizing immediate reward and gaining knowledge for future benefit. 

Optimizing over all histories scales exponentially with the sequence length $L$. Instead, treating belief states $K_t$ as sufficient statistics enables recursive DP. For a fixed action set, the optimal value functions obtained from progressively finer discretizations of the continuous belief space converge to the optimal value function of the underlying continuous-state problem~\cite{lovejoy1991computationally,hauskrecht2000value}. We therefore formulate the resulting finite-state problem as a dynamic program, whose time complexity scales linearly with $L$.

\begin{theorem} 
For a given set of belief states $\mathcal{K}$ and an action set $\A$, there exists a dynamic programming algorithm to determine the optimal classical-causal policy $\Lambda^*$ in time $\mathcal{O}(L)$ that maximizes the average cumulative work extracted over $L$ time steps. Furthermore, if the agent's initial belief matches the initial distribution of the underlying HMM, this maximized cumulative work is mathematically equivalent to the TOFE, $\mathcal{F}_{\text{TO}}^{(1:L)}$ in Eq.~\eqref{eq:max_expected_work}.
\end{theorem}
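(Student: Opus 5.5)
The plan is to build the algorithm as a finite-horizon Bellman backward recursion on the belief space $\mathcal{K}$, and then to show that the recursion computes $\F_{\text{TO}}^{(1:L)}$.

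\textbf{The recursion.} Define terminal values $V_L(\gbm\eta)=0$ for all $\gbm\eta\in\mathcal{K}$. For $t=L,L-1,\ldots,1$ set
\begin{equation*}
V_{t-1}(\gbm\eta)=\max_{a\in\A}\Bigl[\langle W(\gbm\eta,a)\rangle+\sum_{w}\Pr(w\mid\gbm\eta,a)\,V_t\bigl(\tau(\gbm\eta,a,w)\bigr)\Bigr],
\end{equation*}
and let $\Lambda^*_t(\gbm\eta)$ be a maximizer. Here $\langle W(\gbm\eta,a)\rangle$ is given by Eq.~\eqref{eq:final_rewards}. The factor $\Pr(w\mid\gbm\eta,a)=\sum_x\Pr(x\mid\gbm\eta)\Pr(w\mid x,a)$ and the update $\tau$ of Eq.~\eqref{eq:bayesian_update} are fixed by the HMM and the protocol $\mathcal{W}_{\rho_a}$. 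When $\tau$ leaves the grid, the updated belief is projected to its nearest point of $\mathcal{K}$, which is the discretization the convergence results cited above justify.

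\textbf{Complexity.} The immediate rewards, the transition probabilities and the successor indices $\tau(\gbm\eta,a,w)$ do not depend on $t$. They can therefore be tabulated once, at cost independent of $L$. Each backward step then costs $\mathcal{O}(|\mathcal{K}|\,|\A|\,|\mathcal{W}|)$, where $\mathcal{W}$ is the finite set of possible work outcomes (finite because the systems are finite-dimensional). Summing over $L$ steps gives total time $\mathcal{O}(L)$ for fixed $\mathcal{K}$ and $\A$. This replaces the $|\A|^L$ enumeration of action sequences.

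\textbf{Optimality among belief policies.} A standard induction on $t$ from $L$ downward shows that $V_{t-1}(\gbm\eta)$ is the maximal expected future work from belief $\gbm\eta$ over all belief-dependent policies. The inductive step needs two facts:
\begin{itemize}
\item linearity of expectation splits the cumulative work into the immediate reward plus the expected continuation;
\item the tower property conditions the continuation on the next observation $w$.
\end{itemize}
Because the maximization is pointwise at each belief, no policy can beat the greedy choice inside the Bellman operator.

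\textbf{Equivalence with the TOFE.} We must show that restricting from history-dependent $\Gamma$ to belief-dependent $\Lambda$ loses nothing. I will invoke the sufficiency result of Appendix~\ref{sec:sufficient_statistic}: the conditional law of $(S_t,X_{t+1},W_{t+1},\ldots)$ given $\mathbf{h}_t$ and any future actions depends on $\mathbf{h}_t$ only through $\gbm\eta_t$.

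With this, an induction shows that for any $\Gamma$ the expected remaining work given $\mathbf{h}_{t-1}$ is at most $V_{t-1}(\gbm\eta_{t-1})$. Summing the expected work at the root gives $\E_{\mathbf{H}_L|\Gamma}[\sum_t W_t]\le V_0(\gbm\eta_0)$. The bound is attained by $\Lambda^*$, so the two maxima coincide.

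The hypothesis $\gbm\eta_0=\mu_0$ is essential here. It makes the agent's Bayesian beliefs the true posteriors, so that $\Pr(x\mid\gbm\eta)$ and $\xi_{\gbm\eta}$ in Eq.~\eqref{eq:final_rewards} equal the actual conditional distribution and state. The agent's predicted rewards are then the real expected work, and $V_0(\mu_0)=\F_{\text{TO}}^{(1:L)}$.

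\textbf{Main obstacle.} I expect the hardest step to be this sufficiency argument, i.e., verifying that the expected work of Eq.~\eqref{eq:expected_reward_ub}, conditioned on the full history, reduces to Eq.~\eqref{eq:final_rewards} evaluated at $\gbm\eta_{t-1}$. Two things must be checked:
\begin{itemize}
\item the battery measurement disturbs nothing the source later emits, which holds because the emitted systems are fresh product states given $x_{1:L}$;
\item the work outcome depends on the past only through $x_t$.
\end{itemize}
I will establish this by writing the joint distribution from the causal network of Fig.~\ref{fig:combined_fig}(b) and marginalizing over the hidden states.
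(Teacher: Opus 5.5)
Your architecture matches the paper's. You use the sufficiency of the belief state (Appendix~\ref{sec:sufficient_statistic}), a backward Bellman recursion with zero terminal value and pointwise maximization (Alg.~\ref{alg:DPP_for_work}, Eq.~\eqref{eq:proof}), and an $\mathcal{O}(L)$ count because the per-step cost does not depend on $t$. Your direct inductive bound of an arbitrary history-dependent $\Gamma$ by $V_{t-1}(\gbm\eta_{t-1})$ uses sufficiency of the belief for the whole future law. That is tighter than the paper's replacement argument, which invokes sufficiency only for the law of $X_t$ entering the immediate reward.

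The step that fails as written is the exact identity $V_0(\mu_0)=\F_{\text{TO}}^{(1:L)}$ once you have built nearest-point projection into the recursion.
\begin{itemize}
\item After $\tau(\gbm\eta,a,w)$ is replaced by a grid point $\hat{\gbm\eta}$, later rewards and likelihoods are evaluated with $\xi_{\hat{\gbm\eta}}$ and $\Pr(w\mid\hat{\gbm\eta},a)$. These are no longer the true conditional state and outcome law, so your claim that the agent's beliefs are the true posteriors is false.
\item $V_0(\mu_0)$ is then the value of a surrogate process, not the expected work of any strategy.
\item Your induction bounding an arbitrary $\Gamma$ also breaks. $V_{t-1}$ is only defined on $\mathcal{K}$, so it cannot be evaluated at the off-grid beliefs that exact Bayesian updating under $\Gamma$ actually reaches.
\end{itemize}

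For exact equivalence you need two conditions. First, $\mathcal{K}$ must contain every belief reachable from $\mu_0$ within $L-1$ steps under actions in $\A$, so that $\tau$ never leaves $\mathcal{K}$. Second, the strategies $\Gamma$ in Eq.~\eqref{eq:max_expected_work} must draw from the same $\A$ that the recursion maximizes over. State the second half of the theorem under that closure hypothesis, and keep projection only as the approximation scheme whose convergence under refinement is cited in the main text. This is how the paper separates the two: Appendix~\ref{App:equivalence} works with exact belief trajectories.

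This closed set generally grows with $L$, so the $\mathcal{O}(L)$ count refers to a fixed, given $\mathcal{K}$. The paper's statement shares this caveat.
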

We detail the algorithm (Alg.~\ref{alg:DPP_for_work}) and prove its optimality in Appendix~\ref{App:optimality_DP}, alongside the formal equivalence derivation in Appendix~\ref{App:equivalence}.
To ensure computational tractability at each step $t$, Theorem~\ref{thm2} (Appendix~\ref{sec:narrow_search}) proves it is sufficient to search only over the set of possible eigenbases rather than the full space of density matrices.

\noindent\textit{Thermodynamic hierarchy}\m
In the asymptotic limit, the TOFE rate is defined using Eq.~\eqref{eq:max_expected_work} as
\begin{equation}
\label{eq:asymp_extraction_rate}
f_{\text{TO}}\coloneqq\lim_{L\to\infty}\frac{1}{L} \F^{(1:L)}_{\text{TO}}(\rho^{(1:L)})~.
\end{equation}   
Here, transient ``terminal effects'' become negligible, and the optimal policy $\Lambda^*$ becomes stationary~\cite{sutton1998reinforcement,scherrer2012use} (see Appendix~\ref{sec:transient} for a discussion of the non-stationary regime), allowing us to approximate $f_{\text{TO}}$ using finite-horizon computation.

The stationary regime of $\Lambda^*$ induces a finite-state Markov chain on the agent's belief states. The TOFE rate can be found using the resulting limiting distribution.
\begin{equation}
    \beta f_{\text{TO}} \!=\!\E_{\gbm\eta\in\gbm \pi_{\lim}} \!\left[\D(\xi_{\gbm\eta}\|\gamma_Q)\!-\!\D(\xi_{\gbm\eta}\|\rho_{\Lambda^*(\gbm\eta)})\!\right].
\end{equation}
Here, the expectation is taken over all beliefs $\gbm\eta$ weighted by $\gbm\pi_{\text{lim}}$ and $\rho_{\Lambda^*(\gbm\eta)}$ is the target state selected by the optimal policy conditioned on belief state $\gbm\eta$.  

We contextualize TOFE against two benchmarks. For any multi-time state in Eq.~\eqref{eq:fuel_state}, the fundamental upper bound on work extraction is the non-equilibrium free energy, $\F_{\text{noneq}}^{(1:L)}\coloneqq \beta^{-1} \D(\rho^{(1:L)}\|\gamma^{\otimes L})$.
Accordingly, for any bounded, finite-dimensional Hamiltonian, the non-equilibrium free energy rate of a quantum state generated by a stationary quantum process can be defined as 
\begin{equation}
\label{eq:FE_rate}
    f_{\text{noneq}} \coloneqq \lim_{L\to\infty}\frac{1}{L}\mathcal{F}^{(1:L)}_{\text{noneq}}~.
\end{equation}

For comparison, a lower bound is provided by an LO agent that maximizes the immediate reward at each step, yielding a total work $\mathcal{W}_{\text{LO}}^{(1:L)}$ and a rate $w_{\text{LO}}$~\cite{huang2023engines}. Together, these quantities form a hierarchy:
\begin{equation}
\F_{\text{noneq}}^{(1:L)} \geq \F_{\text{TO}}^{(1:L)}\geq\mathcal{W}_{\text{LO}}^{(1:L)}, \hquad f_{\text{noneq}} \geq f_{\text{TO}}\geq w_{\text{LO}}~.
\end{equation}

\noindent\textit{Causal dissipation}\m The difference between the non-equilibrium free energy and the TOFE is generally non-zero, even if entanglement is absent. Here we show that this gap can be quantified by \emph{causal dissipation}, $\delta({Q_{\overrightarrow{1:L}}})$, where
\begin{equation}
\label{eq:causal_definition}
    \delta({Q_{\overrightarrow{1:L}}}) \coloneqq \mathcal{F}_{\text{noneq}}^{(1:L)}-\mathcal{F}_{\text{TO}}^{(1:L)}~.
\end{equation}

Operationally, sequential work extraction is mathematically equivalent to sequential quantum measurement in our framework: rather than directly measuring $Q_t$, the agent implements a thermal operation tailored to target state $\rho_{\Lambda^*(t)}$. The amount of extracted work $W_t$ thus serves as an indirect measurement of the system itself. For a degenerate Hamiltonian, employing the protocol of Ref.~\cite{skrzypczyk2014work} tailored to $\rho=\sum_{i}\lambda_i\ket{\lambda_i}\!\bra{\lambda_i}$ yields an extracted work $W_t$ with the following probability distribution when applied to a state $\sigma$:
\begin{equation}
\label{eq:work_dist}
\Pr(W_t=w^{(i)}) = \bra{\lambda_i}\sigma\ket{\lambda_i},\hquad\beta w^{(i)} \coloneqq \ln{\lambda_i}+\ln d~,
\end{equation}
where $d$ is the dimension of $\mathcal{H}_Q$. The statistics of $W_t$ therefore perfectly mirror the informational statistics of a direct measurement in the $\ket{\lambda_i}$ basis. In what follows, we assume that this measurement is rank-1 for simplicity; that is, each work value precisely identifies the associated state $\ket{\lambda_i}$ onto which the system is projected. In Appendix~\ref{sec:local_perturb}, we show that situations where multiple states $\ket{\lambda_i}$ yield the same work value can be reduced to the rank-1 scenario via small perturbations to the extraction protocol, incurring negligible thermodynamic cost. The direct correspondence between work costs and measurement outcomes then allows us to quantify the thermodynamic cost of causality using information entropy.
\begin{theorem}
\label{thm:thm3}
    For a sequence of multi-time quantum states $\rho^{(1:L)}$, the causal dissipation, $\delta(Q_{\overrightarrow{1:L}})$ in Eq.~\eqref{eq:causal_definition}, can be expressed as
    \begin{equation}
\begin{split}
\frac{\delta(Q_{\overrightarrow{1:L}}) }{k_\mathrm{B}T}&=\min_{\Lambda}\mathbb{E}_{\mathbf{H}_{L-1}|\Lambda} \bigg[\sum_{t=1}^{L-1}H(W_t|\mathbf{H}_{t-1})\\
    &\quad\quad\quad+S(\tilde{\rho}_{Q_L | \mathbf{H}_{L-1}})\bigg]-S(\rho^{(1:L)})\\
\end{split}
\end{equation}
where \(H\) and \(S\) are Shannon and von Neumann entropies. The expectation is over the policy $\Lambda$ mapping each specific belief state $k_{t-1}$ to an action $a_t$. Here, $W_t$ is the extracted work, and $\tilde\rho_{Q_L|\mathbf{h}_{L-1}}$ is the final reduced post-measurement state given an observed history $\mathbf{h}_{L-1}\in\mathbf{H}_{L-1}$.
\end{theorem}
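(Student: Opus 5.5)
Since $\beta\F_{\text{noneq}}^{(1:L)} = \D(\rho^{(1:L)}\|\gamma^{\otimes L}) = L\ln d - S(\rho^{(1:L)})$ for an energy-degenerate $Q$ with $\gamma_Q=\mathds{1}/d$, the theorem follows once we show
\begin{equation*}
\beta\F_{\text{TO}}^{(1:L)} = L\ln d - \min_{\Lambda}\E_{\mathbf{H}_{L-1}|\Lambda}\Bigl[\sum_{t=1}^{L-1}H(W_t|\mathbf{H}_{t-1}) + S(\tilde\rho_{Q_L|\mathbf{H}_{L-1}})\Bigr].
\end{equation*}
The plan is to evaluate the expected work of a fixed policy $\Lambda$ step by step and then optimize.

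Fix a history $\mathbf{h}_{t-1}$ and the action $a_t=\Lambda(\gbm\eta_{t-1})$ with target $\rho_{a_t}=\sum_i\lambda_i\ket{\lambda_i}\!\bra{\lambda_i}$. The agent's expected state of $Q_t$ is $\xi_{\gbm\eta_{t-1}}$, which is the reduced state of $Q_t$ conditioned on the past measurement record, $\tilde\rho_{Q_t|\mathbf{h}_{t-1}}$. By Eq.~\eqref{eq:work_dist}, the conditional expected work is
\begin{equation*}
\beta\langle W_t\rangle = \ln d + \sum_i p_i\ln\lambda_i,\qquad p_i=\bra{\lambda_i}\xi\ket{\lambda_i}.
\end{equation*}
This equals $\ln d - H(p) - D_{\mathrm{KL}}(p\|\lambda)$. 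For a fixed eigenbasis, the choice $\lambda_i = p_i$ removes the KL term without changing the outcome statistics, and hence without changing any future beliefs. This is the reduction to eigenbases in Theorem~\ref{thm2}. Under rank-1 measurements (Appendix~\ref{sec:local_perturb}), $H(p)$ is exactly $H(W_t|\mathbf{H}_{t-1}=\mathbf{h}_{t-1})$.

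So for every policy restricted to basis choices,
\begin{equation*}
\beta\,\E\Bigl[\sum_t W_t\Bigr] = L\ln d - \E\Bigl[\sum_{t=1}^{L} H(W_t|\mathbf{H}_{t-1})\Bigr].
\end{equation*}
At the final step no future information is needed, so the optimal choice is the eigenbasis of $\tilde\rho_{Q_L|\mathbf{h}_{L-1}}$. This makes $H(W_L|\mathbf{h}_{L-1}) = S(\tilde\rho_{Q_L|\mathbf{h}_{L-1}})$, which is the minimum over bases by Schur concavity (the diagonal dominates the spectrum in majorization).

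Taking the maximum over $\Lambda$ turns the expression into the stated minimum, with the expectation over $\mathbf{H}_{L-1}$ by the tower property. Appendix~\ref{App:equivalence} ensures that belief-based policies $\Lambda$ are as powerful as general causal strategies $\Gamma$, so this maximum is indeed $\F_{\text{TO}}$. Subtracting from $\beta\F_{\text{noneq}}$ gives the formula.

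The main obstacle is justifying that $\xi_{\gbm\eta_{t-1}}$ coincides with the true conditional reduced state of $Q_t$ given $\mathbf{h}_{t-1}$. This matters because the measured systems are in product form conditional on $x_{1:L}$. The Bayesian update in Eq.~\eqref{eq:bayesian_update}, with the initial belief equal to $\mu_0$, should give it by induction on $t$. A second point to check is that restricting to $\lambda=p$ loses nothing, since work values affect beliefs only through the labels $i$.
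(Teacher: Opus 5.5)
Your proposal is correct and follows the paper's proof essentially step for step: you write the per-step work as $\ln d+\tr(\tilde\rho_{Q_t|\mathbf{h}_{t-1}}\ln\rho_{a_t})$, use Theorem~\ref{thm2} to turn the trace into $-H(W_t|\mathbf{h}_{t-1})$, use the greedy final step to obtain $S(\tilde\rho_{Q_L|\mathbf{h}_{L-1}})$, and subtract from $\beta\F_{\text{noneq}}^{(1:L)}=L\ln d-S(\rho^{(1:L)})$. Your version is slightly tighter, because you prove the entropy identity for every policy before optimizing, so $\min_\Lambda$ follows directly from $\max_\Lambda$, whereas the paper evaluates only at $\Lambda^*$ and identifies that value with the minimum ``by definition'' (one small wording fix: it is the spectrum that majorizes the diagonal, not the reverse, though your entropy inequality $H(W_L|\mathbf{h}_{L-1})\geq S(\tilde\rho_{Q_L|\mathbf{h}_{L-1}})$ is correct).
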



The causal dissipation has the following operational interpretation.
At each time step $t\in[1,L-1]$, the agent measures the local subsystem $Q_t$ using the projective measurement $\Pi_{Q_t|\mathbf{h_{t-1}}}$ based on prior measurements and outcomes. This is followed by work extraction from the collapsed state at $Q_t$. The first Shannon entropy term thus describes the increase in entropy of the agent's memory based on outcome $W_t$. $\tilde{\rho}_{Q_L | \mathbf{H}_{L-1}}$ is the reduced state at time $L$. The von Neumann entropy of the reduced state then represents the work extraction deficit when the agent performs work extraction without further measurements. Causal dissipation is therefore the minimal difference between the total entropy introduced by these measurements and the joint state's original entropy. The equivalence between causal dissipation and this ``operational interpretation" is proved in Appendix~\ref{app:prove_dissipation}. 

For a bipartite system ($L\!=\!2$), causal dissipation reduces to a measure of quantum discord~\cite{zurek2003quantum, brodutch2010quantum}. It vanishes for states with no quantum correlations (e.g., product states or classical-quantum states) and is asymmetric with respect to time ordering, i.e., $\delta(Q_{\overrightarrow{1:L}})\neq\delta(Q_{\overleftarrow{1:L}})$. A unique feature of causal dissipation is that its asymptotic rate can vanish even when the total dissipation over a finite time is non-zero. This, along with other properties, is derived in Appendix~\ref{sec:dissipation_property}. This can also be viewed as a quantum extension of the cost of modularity~\cite{boyd2018thermodynamics}.

\noindent\textit{Perturbed coin}\m 
To illustrate our framework, we benchmark the DP agent against both the unconstrained free energy bound and a greedy local-optimizing (LO) agent when extracting energy from a qubit sequence generated by the ``perturbed coin" source in Fig.~\ref{fig:combined_fig}(a). Each qubit can be in the state $\ket{\phi_0}$ or $\ket{\phi_1}$ depending on a latent state of the HMM analogous to a coin. Specifically, at each time step, the coin flips with probability $p$ and emits state $\ket{\phi_{x_t}}$ depending on whether it lands on $s$ or $s'$. For the work extraction, we implement the protocol proposed in \cite{skrzypczyk2014work}, which yields the work distribution in Eq.~\eqref{eq:work_dist}. The belief states for this model are $\mathcal{K}= \left\{ \left( 1/2 + \epsilon, 1/2 - \epsilon \right) \ \middle|\ \epsilon \in \left[-1/2, 1/2\right] \right\}$, where $\epsilon\!\to\!\pm1/2$ indicates certainty in either of the latent states.

We first compare the TOFE rate ($f_{\text{TO}}$) against the ultimate unconstrained free energy rate ($f_{\text{noneq}}$) (Fig.~\ref{fig:LOvsDPP}(a)). The gap between them precisely quantifies the asymptotic rate of causal dissipation. We compare these for various transition probabilities $p$, and state overlaps $r\! =\! |\!\ip{\phi_0}{\phi_1}\!|^2$ (or fidelity for non-pure states). We see that the dissipation rate vanishes whenever the dynamics are deterministic ($p\! \in\! \{0,1\}$), purely classical ($r\!=\!0$), or uncorrelated ($p\!=\!1/2$ or $r\!=\!1$)---signaling that our causally constrained agent achieves perfect efficiency in regimes where there are no explicitly quantum correlations.
\begin{figure}
    \centering
    \includegraphics[width=1\linewidth]{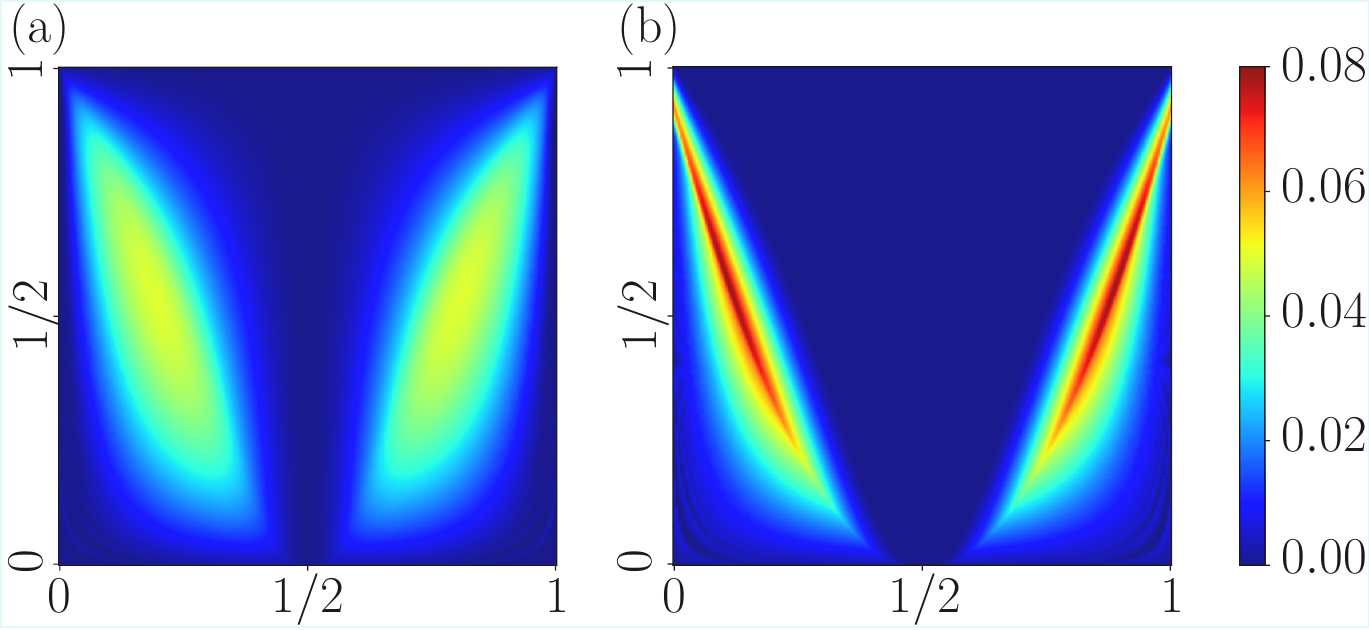}
    \caption{\justifying Comparison of asymptotic work-extraction rates in units of $k_\mathrm{B}T$. Rates are plotted as a function of the transition probability $p$ ($x$-axis) and the overlap between emitted quantum states $r$ ($y$-axis). (a) Work rate lost due to causal constraints, given by $f_{\text{noneq}} - f_{\text{TO}}$. (b) Performance advantage of the optimal policy over a greedy strategy, given by $f_{\text{TO}} - w_{\text{LO}}$.}
    \label{fig:LOvsDPP}
\end{figure}
Conversely, when comparing the optimal policy to the greedy LO strategy (Fig.~\ref{fig:LOvsDPP}(b)), the DP agent's performance advantage ($f_{\text{TO}} - w_{\text{LO}}$) peaks in moderately stochastic regimes. Here, the DP agent strategically incurs immediate dissipation to obtain predictive information, significantly boosting long-term yield. As the process approaches the trivial limits of maximum randomness or indistinguishability, this advantage naturally vanishes, and both strategies converge.

Numerical simulations for finite horizons ($L\!=\!3,4$) confirm that the work deficit exactly matches the predicted causal dissipation (Appendix~\ref{sec:numerical}, Fig.~\ref{fig:full_compare}). Furthermore, while a closed-form expression for $f_{\text{noneq}}$ remains intractable for this model, we rigorously establish its analytical lower bound using the data processing inequality (Appendix~\ref{app:free_energy_rate_bound}).


To unpack the physical mechanics of the optimal policy $\Lambda^*$, in Fig.~\ref{fig:compare_action} we analyze the expected immediate dissipation in Eq.~\eqref{eq:final_rewards} against the agent's uncertainty, quantified by the binary entropy of its belief, $h_2(\gbm\eta_\epsilon)\!=\!-(1/2\!+\!\epsilon)\log_2(1/2\!+\!\epsilon)\!-\!(1/2\!-\!\epsilon)\log_2(1/2\!-\!\epsilon)$. For unpredictable processes ($p \!\approx\! 1/2$, $r\!\approx\!1$), future information has negligible value; the agent defaults to a greedy, low-dissipation strategy across all beliefs. In contrast, in predictable regimes, the agent strategically sacrifices immediate work to gather predictive information. This dissipation peaks at the highest uncertainty ($h_2(\gbm\eta_\epsilon)\!=\!1$) and decreases as beliefs approach certainty ($h_2(\gbm\eta_\epsilon)\! \to\! 0$), explicitly demonstrating how the optimal policy trades immediate thermodynamic yield for long-term predictive power.

\begin{figure}
    \centering
    \includegraphics[width=1\linewidth]{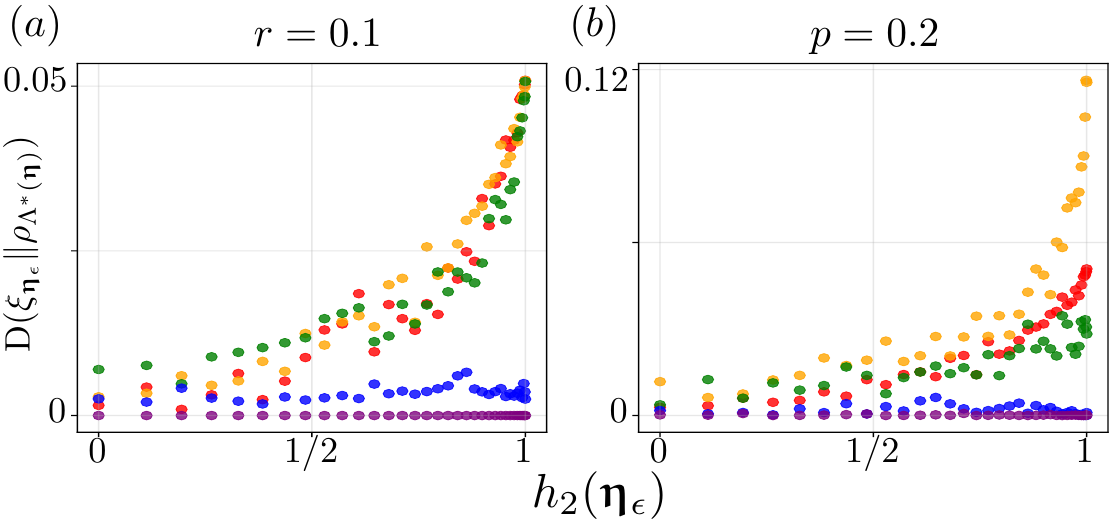}
    \caption{\justifying Relation between how much energy an agent is willing to locally sacrifice vs. uncertainty. The energetic price of information, quantified by $\D(\xi_{\gbm\eta_\epsilon}\|\rho_{\Lambda^*(\gbm\eta)})$ ($y$-axis), represents how much energy an agent is willing to sacrifice when extracting work from the current quantum system. Meanwhile, $h_2(\gbm\eta_\epsilon)$($x$-axis) represents the agent's uncertainty about the underlying source. Panels (a) and (b) show scatter plots of how these quantities relate for various parameters of a perturbed coin source. (a) Varying transition probability $p\! \in\! \{0.1, 0.2, 0.3, 0.4, 0.5\}$ for a fixed overlap $r=0.1$. (b) Varying overlap $r \!\in\! \{0.1, 0.3,0.5, 0.7, 0.9\}$ for a fixed transition probability $p=0.2$. In both panels, colors denote ascending parameter values (red, orange, green, blue, purple). We see that there is a positive correlation between the two quantities: in predictable regimes, the optimal strategy dictates that agents who are less certain of their source should sacrifice more energy to reduce that uncertainty. However, there do also exist unpredictable regimes (blue, purple) where sacrificing energy has little value since it yields little information.}  
    \label{fig:compare_action}
\end{figure}
\noindent\textit{Discussion}\m
Here, we considered the problem of an agent sequentially extracting work from a temporal sequence of $L$ quantum systems generated by a non-Markovian environment. The dynamics of the environment depend on hidden internal degrees of freedom that the agent cannot directly observe. Meanwhile, the agent operates online and is causally constrained: it must choose a local work-extraction protocol based solely on its history of past observations. Using a dynamic programming approach, we demonstrated how to identify the optimal work-extraction policy with computational cost that scales linearly with $L$. This enables us to formally define and determine the time-ordered free energy (TOFE): the true maximum work that can be harvested from a sequence of quantum states subject to causal constraints. Our analysis yields two key insights. First, imposing temporal causality carries a fundamental energetic cost, resulting in an unavoidable work deficit. Second, sequential work extraction demands a fundamental thermodynamic trade-off: to maximize long-term yield, autonomous agents must strategically sacrifice immediate energy to acquire predictive information for future extraction.

These results open several promising avenues for further research. The first is a generalization of our work to systems with non-degenerate Hamiltonians. We made some progress in this direction in Appendix~\ref{sec:nondegen}, showing that the relation between heat dissipation and agent-expectation mismatch continues to hold, while the physical control cost remains unresolved. Meanwhile, a key assumption here is that our agents begin with a full mathematical description of the hidden Markov source that generates the sequence of incident quantum systems, so that learning focuses on inferring which hidden state the source occupies at each time step. It would also be interesting to consider scenarios in which the dynamics of the hidden Markov source itself must be learned, as in recent quantum-state-agnostic work on work extraction~\cite{watanabe2024black,watanabe2026universal,lumbreras2025quantum}. More fundamentally, the heat dissipation imposed by causal constraints provides a new source of irreversibility. This mirrors several studies---both in machine learning and in stochastic modeling---that have found that certain classical sequences are harder to predict in one temporal direction than in the other~\cite{thompson2018causal,kechrimparis2023causal}. It would certainly be interesting to investigate such effects in our energy-harvesting context.

\noindent\textit{Acknowledgments}\m
This work is supported by the National Research Foundation through the NRF Investigatorship on Quantum-Enhanced Agents (Grant No. NRF-NRFI09-0010) and the National Quantum Office, hosted at A*STAR, under its Centre for Quantum Technologies Funding Initiative (S24Q2d0009), the Singapore Ministry of Education Tier 1 Grant RT4/23 and RG91/25 (S) and the RIE25 Japan-Singapore Joint Call on Quantum R25J4IR111.

\bibliographystyle{unsrt}
\bibliography{References.bib}

\setcounter{secnumdepth}{3}
\onecolumngrid
\appendix
\section{Energy Non-degenerate Hamiltonian}
\label{sec:nondegen}
In the main body, we have discussed the application of the $\rho^*$-ideal protocol for a degenerate Hamiltonian. We mentioned that for a degenerate Hamiltonian, this class of operations overlaps with thermal operations, i.e., there exists a thermal operation that can extract all non-equilibrium free energy from any quantum state. This is possible only because every basis is an energy eigenbasis in a degenerate Hamiltonian. The strict energy conservation requirement has famously made thermal operations unable to extract work from coherence with respect to the energy eigenbasis; this vanishes when all bases are eigenbases. Suppose we relax this constraint; then we essentially allow an operation that implements a rotation in the energy eigenbasis, opening up the possibility of creating coherence in the energy eigenbasis, which necessarily violates global energy conservation since it breaks time-translation symmetry~\cite{lostaglio2015quantum}. This was mitigated by the $\rho^*$-ideal protocol whenever the input is diagonalized in $\rho^*$. But in general, the implementation of such an operation on an unknown state will cause coherence to be created in the energy eigenbasis. This would require a further injection of energy. 

Take the example of the protocol we considered, assuming that the Hamiltonian is now $H_Q = E_0 \ket{E_0}\!\bra{E_0}+E_1\ket{E_1}\!\bra{E_1}$. Suppose we apply a protocol tailored to $\rho=\sum_i\lambda_i\ket{\psi_i}\bra{\psi_i}$ to $\sigma$. In the first stage of rotation, the unitary $U = \sum_i \ket{E_i}\!\bra{\psi_i}\otimes \Gamma_{\epsilon_i}$, where $\epsilon_i = \bra{\psi_i}H\ket{\psi_i}-E_i$, is applied. The resultant state $\sigma'_{QB}$ is given by
\begin{equation}
    \begin{split}
        \sigma'_{QB}&=U\sigma_{QB}U^\dag = \sum_{i,j}\ket{E_i}\!\bra{\psi_i}\sigma \ket{\psi_j}\!\bra{E_j}\otimes \Gamma_{\epsilon_i}\rho_W\Gamma_{\epsilon_j}^\dag\\
        &=\sum_{i,j} \bra{\psi_i}\sigma \ket{\psi_j}\ket{E_i}\!\bra{E_j}\otimes \Gamma_{\epsilon_i}\rho_W\Gamma_{\epsilon_j}^\dag
    \end{split}
\end{equation}
Here we can compare the energy before and after the rotation. Before the rotation, the energy of the joint system is given by $\tr(\sigma_{QB}\mathcal{H}_{QB})=\sum_{ij}\bra{\psi_i}\sigma\ket{\psi_j}\bra{\psi_j}\mathcal{H}_Q\ket{\psi_i}$. On the other hand, after the rotation, the energy of the joint system will be given by $\sum_i \bra{\psi_i}\sigma\ket{\psi_i}\bra{\psi_i}\mathcal{H}_Q\ket{\psi_i}$. There is an overall change in energy given by 
\begin{equation}
    \begin{split}
        \Delta U &= \sum_i \bra{\psi_i}\sigma\ket{\psi_i}\bra{\psi_i}\mathcal{H}_Q\ket{\psi_i} - \sum_{i,j}\bra{\psi_i}\sigma\ket{\psi_j}\bra{\psi_j}\mathcal{H}_Q\ket{\psi_i}\\
        &= - \sum_{i\neq j }\bra{\psi_i}\sigma\ket{\psi_j}\bra{\psi_j}\mathcal{H}_Q\ket{\psi_i}~,
    \end{split}
\end{equation}
This also means that the implementation of this unitary is not free but requires \emph{at least} $\Delta U$ of work. Note that if $\sigma$ is also diagonal in the eigenbasis of $\rho$, then $\Delta U = 0$.
Luckily, this is not the end of the world. The expected work extracted by this protocol is given by 
\begin{equation}
    \begin{split}
    \label{eq:exp_work_nondegen}
        \E(W) &= \sum_i\bra{\psi_i}\sigma\ket{\psi_i}\left(\bra{\psi_i}\mathcal{H}_Q\ket{\psi_i}+\beta^{-1}\ln\lambda_i-F\right)\\
        &= \sum_i\bra{\psi_i}\sigma\ket{\psi_i}\bra{\psi_i}\mathcal{H}_Q\ket{\psi_i} + \beta^{-1}\tr(\sigma\ln\rho) -F\\
    \end{split}
\end{equation}
Notice that if we subtract $\Delta U$ from Eq.~\eqref{eq:exp_work_nondegen}, we arrive precisely at
\begin{equation}
\label{eq:recover}
\begin{split}
    \E(W_\text{net})&= \sum_{i,j}\bra{\psi_i}\sigma\ket{\psi_j}\bra{\psi_j}\mathcal{H}_Q\ket{\psi_i} +\beta^{-1}\tr(\sigma\ln\rho) -F\\
    & = \beta^{-1} \left[-\tr(\sigma\ln\gamma)+\tr(\sigma\ln\rho)\right]\\
    &= \beta^{-1}\left[\D(\sigma\|\gamma)-\D(\sigma\|\rho)\right]
\end{split}
\end{equation}
But notice that we previously established that the unitary in stage 1 costs \emph{at least} $\Delta U$. This means that the equality in Eq.~\eqref{eq:recover} is achieved only if the rotation in stage 1 can be implemented at its minimal energetic cost. 

A possible experimental setup was proposed in \cite{skrzypczyk2014work}, which is based on the implementation of a time-dependent interaction Hamiltonian. This, in turn, would require additional resources to keep track of time and change the Hamiltonian smoothly \cite{woods2023autonomous}. More recent work has shown that such an operation can only be realized if one has access to external sources of coherence \cite{aaberg2014catalytic,korzekwa2016extraction}, and even then, this can be achieved approximately using lasers; the energy from the laser must also be accounted for. We leave the careful accounting of these resources for future work and instead focus on the energy-degenerate scenario in the main paper.

\section{Restoring Observability via State Perturbation}

\label{sec:local_perturb} In the context of sequential extraction, the agent relies on the observed work $W_t$ to perform Bayesian updates on its belief state. In certain fine-tuned settings, the observed work value $W_t$ does not uniquely identify the underlying energy eigenstate. Different energy eigenstates may yield the same extracted work value regardless of whether the Hamiltonian is degenerate or non-degenerate. 

Consider a Hamiltonian $\mathcal{H} = \sum_{m} E_m \ket{E_m}\! \bra{E_m}$, where $\{\ket{E_m}\}$ is the energy eigenbasis. If the target state $\rho$ is not initially diagonal in this basis, the agent must first implement a unitary rotation to diagonalize it, paying the associated thermodynamic cost already detailed in Appendix~\ref{sec:nondegen}. Having decoupled that cost, we can henceforth assume the state from which work is extracted is diagonal. Suppose the quantum state is a qudit $\rho=\sum_{i=1}^d \lambda_i\ket{E_i}\!\bra{E_i}$. Following the protocol in \cite{skrzypczyk2014work}, the extracted work value takes the form:

\begin{eqnarray}
    w^{(i)} = E_i+k_BT\ln\lambda_i-\F_{\text{eq}}, \quad \Pr(W=w^{(i)})=\bra{E_i}\rho\ket{E_i}=\lambda_i~.
\end{eqnarray}
Notice that the work values of two distinct energy eigenstates can potentially be identical. Specifically, $w^{(i)} = w^{(j)}$ if and only if:
\begin{equation}
    \lambda_i=\lambda_je^{\beta(E_j-E_i)}~.
\end{equation}
When this condition is met, the agent cannot determine which energy eigenstate it is actually extracting work from. Although this does not affect the average extracted work, it poses a problem if the work value is used as feedback for the agent. To address this, notice that the work value depends on $\lambda_i$. The agent can introduce a small perturbation $\epsilon_i$ to the eigenvalues during extraction. 

Let us denote the perturbed quantum state as $\rho_{\text{perturbed}} = \sum_{i=1}^d (\lambda_i+\epsilon_i)\ket{E_i}\!\bra{E_i}$, where $\sum_{i=1}^d \epsilon_i=0$ to ensure trace preservation. Suppose that the agent tailors the extraction protocol to the perturbed state but operates it on the original state. The work extracted would then be:
\begin{eqnarray}
    w^{(i)} = E_i+k_BT\ln(\lambda_i+\epsilon_i)-\F_{\text{eq}}, \quad \Pr(W=w^{(i)})=\bra{E_i}\rho\ket{E_i}=\lambda_i~.
\end{eqnarray}
\begin{lemma}[Existence of work outcomes via state perturbation]
    For any Hamiltonian $H=\sum_{i=1}^dE_i\ket{E_i}\!\bra{E_i}$ and full-rank target state $\rho=\sum_{i=1}^d\lambda_i\ket{E_i}\!\bra{E_i}$, there exists an arbitrarily small, trace-preserving perturbation $\epsilon=(\epsilon_1,\cdots,\epsilon_d)$ such that the perturbed state $\rho_{\text{perturbed}}=\sum_{i=1}^d(\lambda_i+\epsilon_i)\ket{E_i}\!\bra{E_i}$ yields strictly unique work values for all energy eigenstates.
\end{lemma}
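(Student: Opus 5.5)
Two work values $w^{(i)}$ and $w^{(j)}$, $i\neq j$, coincide exactly when
\[
E_i+k_BT\ln(\lambda_i+\epsilon_i)=E_j+k_BT\ln(\lambda_j+\epsilon_j),
\]
which is equivalent to $\lambda_i+\epsilon_i=(\lambda_j+\epsilon_j)e^{\beta(E_j-E_i)}$. The plan is to show that the perturbations causing any such coincidence form a measure-zero, nowhere-dense set inside the admissible perturbation space. Then a good perturbation exists arbitrarily close to $\epsilon=0$.

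First, fix the admissible space. Let $V=\{\epsilon\in\mathbb{R}^d:\sum_i\epsilon_i=0\}$, a $(d-1)$-dimensional subspace. Since $\rho$ is full rank, $\lambda_{\min}=\min_i\lambda_i>0$. For $\|\epsilon\|_\infty<\lambda_{\min}$, every perturbed eigenvalue $\lambda_i+\epsilon_i$ stays strictly positive. So on the open ball $B_\delta\cap V$ with $\delta<\lambda_{\min}$, the state $\rho_{\text{perturbed}}$ is a valid full-rank density matrix and all logarithms are defined.

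Second, for each pair $i<j$, define the affine functional
\[
g_{ij}(\epsilon)=(\lambda_i+\epsilon_i)-e^{\beta(E_j-E_i)}(\lambda_j+\epsilon_j).
\]
The bad set for this pair is $N_{ij}=\{\epsilon\in V: g_{ij}(\epsilon)=0\}$. Its linear part is $\epsilon\mapsto \epsilon_i-c_{ij}\epsilon_j$ with $c_{ij}=e^{\beta(E_j-E_i)}>0$.

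The key step is to check that this linear part does not vanish identically on $V$. The vector $\epsilon=e_i-e_j$ lies in $V$ and gives $1+c_{ij}>0$. Hence $N_{ij}$ is either empty or an affine hyperplane of $V$ of dimension $d-2$. A finite union $\bigcup_{i<j}N_{ij}$ of proper affine hyperplanes is closed, nowhere dense, and of Lebesgue measure zero in $V$. Therefore every neighbourhood of $0$ in $V$, in particular $B_\delta\cap V$ for any $\delta\in(0,\lambda_{\min})$, contains a point $\epsilon$ outside all the $N_{ij}$.

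For $d=1$ the statement is trivial. For $d\geq2$ the choice above gives pairwise distinct $w^{(i)}$.

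For an explicit construction, one can take $\epsilon=t\,v$ for a fixed $v\in V$ with $v_i-c_{ij}v_j\neq0$ for all pairs. Such a $v$ exists by the same hyperplane-avoidance argument applied to the linear parts. Each $g_{ij}(tv)$ is then a nonconstant affine function of $t$, so it has at most one root. Excluding these at most $\binom{d}{2}$ values of $t$ leaves arbitrarily small admissible $t$.

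The main obstacle is the nondegeneracy check: a coincidence constraint must never be forced by trace preservation. It reduces to the positivity of $1+e^{\beta(E_j-E_i)}$, so the difficulty is mild. The remaining care is to keep the perturbation inside the positivity region, which is exactly where the full-rank assumption is needed.

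Finally, one can note that the thermodynamic cost of the perturbation is $\beta^{-1}\D(\rho\|\rho_{\text{perturbed}})=O(\|\epsilon\|^2)$, which vanishes as $\epsilon\to0$. This is consistent with the "negligible cost" claim, although it is not needed for the existence statement itself.
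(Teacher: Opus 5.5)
Your proposal is correct and takes essentially the same route as the paper. Both proofs treat each coincidence $w^{(i)}=w^{(j)}$ as an affine hyperplane and note that finitely many such hyperplanes, intersected with the $(d-1)$-dimensional trace-zero subspace, cannot cover any neighbourhood of $\epsilon=0$. Two of your steps make explicit what the paper only asserts or leaves implicit: the test vector $e_i-e_j$ gives $1+e^{\beta(E_j-E_i)}>0$, so no coincidence hyperplane is parallel to the trace-zero subspace, and restricting to $\|\epsilon\|_\infty<\lambda_{\min}$ keeps the perturbed state a valid full-rank density matrix.
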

\begin{proof}
    Using the perturbed work values, a degeneracy between any two distinct levels $i\neq j$ occurs if and only if: 
    \begin{eqnarray}
        \epsilon_j-\epsilon_ie^{\beta(E_i-E_j)}=\lambda_ie^{\beta(E_i-E_j)}-\lambda_j~.
    \end{eqnarray}
    Hence, for any fixed pair of levels $(i,j)$, this equation defines a proper $(d-1)$-dimensional affine hyperplane $\mathcal{P}_{i,j}$ in the $d$-dimensional parameter space of all possible perturbations $\epsilon$.

    Since the perturbation must preserve the trace, this constraint restricts the valid perturbations to a $(d-1)$-dimensional subspace. The intersection of any hyperplane $\mathcal{P}_{i,j}$ with this subspace has a dimension of at most $d-2$. Because there are finitely many such hyperplanes, their finite union cannot cover an open set in the subspace. Therefore, there exists a valid, trace-preserving perturbation $\epsilon$ such that $w^{(i)}(\epsilon) \neq w^{(j)}(\epsilon)$ for all $i\neq j$.
\end{proof}
The average energy that one may extract is then, in principle, given by:
\begin{eqnarray}
    \E(W)=k_BT\left[\D(\rho\|\gamma)-\D(\rho\|\rho_\text{perturbed})\right]~,
\end{eqnarray}
where the second term quantifies the excess free energy lost due to the perturbation. In the regime $\vert{}\epsilon_i\vert{} \ll \lambda_i$, we can expand the relative entropy to obtain:
\begin{eqnarray}
    \D(\rho\|\rho_\text{perturbed})\approx\sum_{i=1}^d\lambda_i\left(-\frac{\epsilon_i}{\lambda_i}+\frac{\epsilon_i^2}{2\lambda_i^2}\right) = \sum_{i=1}^d \left(-\epsilon_i +\frac{\epsilon_i^2}{2\lambda_i}\right)~.
\end{eqnarray}
Since we have imposed that $\sum_{i=1}^d\epsilon_i=0$, the excess loss simplifies to
\begin{eqnarray}
    \D(\rho\|\rho_\text{perturbed})\approx \frac{1}{2}\sum_{i=1}^d\frac{\epsilon_i^2}{\lambda_i}
\end{eqnarray}
This demonstrates that the agent can introduce a well-chosen, arbitrarily small perturbation $\epsilon$ to strictly separate all work outcomes and restore perfect Bayesian observability. Meanwhile, the thermodynamic dissipation incurred scales only as $\mathcal{O}(\epsilon^2)$, rendering the cost vanishingly small.

The above analysis assumes that the target state is full-rank. If the state is not full-rank, the assumption $\vert{}\epsilon_i\vert{} \ll \lambda_i$ fails for levels where $\lambda_i=0$. In this scenario, we can first assign a strictly positive perturbation $\epsilon_i> 0$ to any level where $\lambda_i = 0$ to ensure finite work values. The remaining non-zero eigenvalues are then perturbed such that the overall trace is preserved ($\sum_{i=1}^d \epsilon_i = 0$) and all $w^{(i)}$ are unique. The additional loss is given by:
\begin{eqnarray}
   \D(\rho\|\rho_\text{perturbed}) = \sum_{i=1}^d \lambda_i \ln \left( \frac{\lambda_i}{\lambda_i + \epsilon_i} \right) 
\end{eqnarray}
We can split this sum into the support $\mathcal{S}$ and the null space $\mathcal{N}$:
\begin{eqnarray}
\D(\rho\|\rho_\text{perturbed}) = \left[ \sum_{i \in \mathcal{S}} \lambda_i \ln \left( \frac{\lambda_i}{\lambda_i + \epsilon_i} \right) + \sum_{k \in \mathcal{N}} 0 \cdot \ln \left( \frac{0}{\epsilon_k} \right) \right]~.
\end{eqnarray}

The contribution to the relative entropy from the null space spanned by these eigenvectors vanishes ($\lim_{\lambda \to 0} \lambda \ln(\lambda/\epsilon_0) = 0$), by performing a Taylor expansion over the remaining spectrum, we obtain 
\begin{equation}
\begin{split}
    \D(\rho\|\rho_\text{perturbed}) &= \left(\sum_{i\in\mathcal{S}}-\epsilon_i + \sum_{i\in\mathcal{S}}\frac{\epsilon_i^2}{2\lambda_i}\right)\\
    &=\left(\sum_{n\in\mathcal{N}}\epsilon_n+\sum_{i\in\mathcal{S}}\frac{\epsilon_i^2}{2\lambda_i}\right)
\end{split}
\end{equation}
where we used $\sum_{n\in\mathcal{N}}\epsilon_n+\sum_{i\in\mathcal{S}}\epsilon_i=0$. Consequently, for a non-full-rank target state, the perturbation incurs a cost linear in $\epsilon$ for all initially unoccupied levels and quadratic for all occupied levels. 
\section{Proof of Equivalence}
\label{App:equivalence}
We employ dynamic programming (DP) to find the optimal policy, $\Lambda^*$. To do so, we first write the expectation of cumulative work over $L$ time steps, starting from an initial belief $\gbm\eta^{(i)}$ as a value function.
\begin{equation}
\label{eq:overall}
\tilde{V_1}\left(\gbm\eta^{(i)},\Lambda\right) \coloneqq \mathbb{E}_{K_{1:L-1}}\left[\sum_{t=1}^{L} \langle W(K_{t-1},A_t)\rangle \Bigg|K_0=\gbm\eta^{(i)}\right]~,
\end{equation}
where $A_t = \Lambda(K_{t-1})$.  

Here, we provide a formal proof for the equivalence between the TOFE, $\mathcal{F}_{TO}^{(1:L)}$, and the optimal value function $\tilde V^*_1(\gbm\eta^{(i)})$, under the condition that the agent's initial belief is the initial distribution $\mu_0$ of the underlying HMM.
The core of this proof rests on two key ideas:
\begin{enumerate}
    \item The agent's belief state is a sufficient statistic for the entire history of actions and observations.
    \item The belief trajectory $K_{0:L-1}$ is a deterministic function of the trajectory over physical realizations $X_{1:L}$, actions $A_{1:L}$, and observations $W_{1:L}$.
\end{enumerate}

\subsection{The Belief State as a Sufficient Statistic} \label{sec:sufficient_statistic}
We will first prove that belief states are statistically sufficient for making any prediction. Formally, we want to show that
\begin{equation}
    \Pr(X_t|W_{1:t-1},A_{1:t-1},\mu_0)=\Pr(X_t|K_{t-1},K_0=\mu_0)~.
\end{equation}
This tells us that any information that the past observations and actions carry about the future quantum states can be encoded into the belief state, and hence the agent can track the belief state instead of the entire history. 
\begin{lemma}[Sufficiency of Belief States]
\label{lemma:sufficiency}
The belief state $K_{t-1}$ constitutes a sufficient statistic of the history of observations and actions, $(W_{1:t-1},A_{1:t-1})$, for predicting the future output $X_t$. Given an initial distribution over the latent state $\mu_0$, the conditional probability satisfies 
\begin{equation}
    \Pr(X_t|W_{1:t-1},A_{1:t-1},\mu_0)=\Pr(X_t|K_{t-1},K_0=\mu_0)~.
\end{equation}
\end{lemma}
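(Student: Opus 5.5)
The plan is to prove the stronger statement that the belief state equals the posterior over the latent state: $K_{t-1}=\bigl[\Pr(S_{t-1}=s\mid W_{1:t-1},A_{1:t-1},\mu_0)\bigr]_{s\in\mathcal{S}}$. The proof goes by induction on $t$ using the update rule $\tau$ of Eq.~\eqref{eq:bayesian_update}, and then obtains the prediction for $X_t$ by marginalizing over $S_{t-1}$. The base case $t=1$ is immediate, since $K_0=\mu_0=\Pr(S_0)$ with empty history. Throughout, I use the conditional-independence structure of the causal network in Fig.~\ref{fig:combined_fig}(b). The action $A_t$ is a deterministic function of the agent's memory, hence of the past history. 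The work $W_t$ depends only on the emitted label $X_t$ and the action $A_t$, through Eq.~\eqref{eq:work_dist}. The pair $(S_t,X_t)$ depends on the past only through $S_{t-1}$, via $\mathsf{T}^{(x)}$.

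For the inductive step, assume $\eta_{t-1}(s)=\Pr(S_{t-1}=s\mid \mathbf{h}_{t-1})$. Bayes' rule applied to the new data $(a_t,w_t)$ gives
\begin{equation*}
\Pr(S_t=s'\mid \mathbf{h}_t)\propto\sum_{s,x_t}\Pr(S_{t-1}=s\mid\mathbf{h}_{t-1})\,\Pr(a_t\mid\mathbf{h}_{t-1})\,\mathsf{T}^{(x_t)}_{s,s'}\,\Pr(w_t\mid x_t,a_t).
\end{equation*}
Because $a_t=\Lambda(\eta_{t-1})$ is determined by the history, the factor $\Pr(a_t\mid\mathbf{h}_{t-1})$ is either $1$ or $0$ and drops out after normalization. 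I also use the Markov property $\Pr(S_t,X_t\mid S_{t-1},\mathbf{h}_{t-1})=\mathsf{T}^{(x_t)}_{S_{t-1},S_t}$, which holds because the agent's past actions act only on earlier emitted systems and never feed back into the source. With these, the expression is exactly $\tau(\eta_{t-1},a_t,w_t)$, and this closes the induction.

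Finally,
\begin{equation*}
\Pr(X_t=x\mid\mathbf{h}_{t-1})=\sum_{s,s'}\Pr(S_{t-1}=s\mid\mathbf{h}_{t-1})\,\mathsf{T}^{(x)}_{s,s'}=\bigl(K_{t-1}\mathsf{T}^{(x)}\mathbf{1}\bigr),
\end{equation*}
which is a function of $K_{t-1}$ alone. It therefore equals $\Pr(X_t=x\mid K_{t-1},K_0=\mu_0)$, since conditioning on the coarser variable $K_{t-1}$ leaves a quantity that is already $K_{t-1}$-measurable unchanged (tower property).

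The main obstacle is justifying the conditional independences rigorously. The key claim is that $W_t$ is independent of $S_{t-1},S_t$ and the past given $(X_t,A_t)$. This holds because each system $Q_t$ is a fresh product factor $\sigma^{(x_t)}$ in Eq.~\eqref{eq:fuel_state}, acted on by a memoryless reservoir, so the Born probability $\bra{\lambda_i}\sigma^{(x_t)}\ket{\lambda_i}$ depends only on $x_t$ and $a_t$. It also needs the rank-1 (distinct work value) assumption, which Appendix~\ref{sec:local_perturb} secures. I would state these independences explicitly as properties of the causal network before running the induction.
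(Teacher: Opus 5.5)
Your proposal is correct. It uses the same machinery as the paper's proof (induction on $t$, the Bayesian update, and marginalization over the latent state through $\mathsf{T}^{(x)}$), but with a different inductive invariant, and the difference matters.

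The paper takes the predictive identity itself, $\Pr(X_t\mid W_{1:t-1},A_{1:t-1},\mu_0)=\Pr(X_t\mid K_{t-1},K_0=\mu_0)$, as the hypothesis. It derives separately, in Eq.~\eqref{eq:belief_traj}, the one-step relation $[K_t]_s=\Pr(S_t=s\mid W_t,A_t,K_{t-1})$. In the inductive step it then replaces the conditioning on $(W_{1:t-1},A_{1:t-1})$ by $K_{t-1}$ when predicting $X_{t+1}$. That replacement does not follow from a statement about the law of $X_t$ alone. Two priors on $S_{t-1}$ can induce the same law of $X_t$ but different joint laws of $(X_t,S_t)$, and hence different predictions of $X_{t+1}$ once $W_t$ is seen. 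The step really requires that $K_{t-1}$ be the full-history posterior over $S_{t-1}$, which is exactly what your induction establishes.

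With your invariant, the prediction $\Pr(X_t=x\mid\mathbf{h}_{t-1})=K_{t-1}\mathsf{T}^{(x)}\mathbf{1}$ is a one-line corollary. You also spell out several conditional independences that the paper uses only implicitly:
the action factor drops out because $a_t$ is determined by the history;
the source receives no feedback from the agent;
$W_t$ depends on the past only through $(X_t,A_t)$, because Eq.~\eqref{eq:fuel_state} is a mixture of product states.
Finally, your tower-property step reconciles the two readings of the right-hand side: the vector expression $K_{t-1}\mathsf{T}^{(x)}\mathbf{1}$ and a genuine conditional probability given the random variable $K_{t-1}$. The paper's phrasing buys only brevity.

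One small correction: the rank-1 assumption secured in Appendix~\ref{sec:local_perturb} is not needed for this lemma. If several eigenvectors share a work value, Bayes' rule simply uses the coarse-grained likelihood $\Pr(W_t=w\mid x_t,a_t)=\sum_{i:\,w^{(i)}=w}\bra{\lambda_i}\sigma^{(x_t)}\ket{\lambda_i}$, which Eq.~\eqref{eq:bayesian_update} already accommodates. Because the joint state is separable, the post-measurement state of $Q_t$ never influences later systems. Distinct work values matter elsewhere in the paper, not for sufficiency.
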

\begin{proof}
    We show this inductively. The base case is trivially true for $t=1$. 
    \begin{equation}
        \Pr(X_1|\mu_0)=\Pr(X_1|K_0=\mu_0)
    \end{equation}
    The LHS of the equation can easily be expressed using the labeled transition matrix $\mathsf{T}^{(x)}$:
    \begin{equation}
        \Pr(X_1=x|\mu_0) = \mu_0 \mathsf{T}^{(x)}\mathbf{1}
    \end{equation}
    whereas the RHS can likewise be expressed in a similar form:
    \begin{equation}
        \Pr(X_1=x|K_0) = K_0\mathsf{T}^{(x)}\mathbf{1}
    \end{equation}
    Now clearly, the two expressions are equivalent as long as we set the condition $K_0 = \mu_0$. The physical intuition of this is simple: before observing any emission, the agent knows how the process is initialized, therefore having a prior in the form of $\mu_0$. If the agent does not have information about $\mu_0$, the physical trajectory can be assumed to begin in the stationary distribution, which represents the long-run average after transient effects have died out~\cite{crutchfield1989inferring,Shalizi2001}. 

    Now, we show this for the base case $t=2$, i.e.,
    \begin{equation}
        \Pr(X_2|W_1,A_1,\mu_0)=\Pr(X_2|K_1,K_0=\mu_0)
    \end{equation}
    From the case of $t=1$, we can rewrite the LHS as
    \begin{equation}
        \Pr(X_2|W_1,A_1,\mu_0)=\Pr(X_2|W_1,A_1,K_0=\mu_0)
    \end{equation}
We now show that given a belief $K_{t-1}$, an action $A$, and an observation $W_t$, one can construct the next belief state $K_t$:
    \begin{equation}
\label{eq:belief_traj}
\begin{split}
   [K_{t}]_s &= \frac{\sum_{x,s'}\Pr(S_t=s,X_t=x,W_t=w|S_{t-1}=s',A_t=a)[K_{t-1}]_{s'}}{\sum_{x,s,s'}\Pr(S_t=s,X_t=x,W_t=w|S_{t-1}=s',A_t=a)[K_{t-1}]_{s'}}\\
   &=\frac{\sum_{x,s'}\mathsf{T}^{(x)}_{s',s}\Pr(w|x,a)[K_{t-1}]_{s'}}{\sum_{x,s,s'}\mathsf{T}^{(x)}_{s',s}\Pr(w|x,a)[K_{t-1}]_{s'}}\\
   &=\Pr(S_t=s|W_t=w,A_t=a,K_{t-1})~,
\end{split}
\end{equation}
where the summations are over $x \in \mathcal{X}$ and $s,s'\in\mathcal{S}$.
The sum in the denominator is the probability $\Pr(W_t=w|K_{t-1},A_t=a)$.
Notice that the dependence on $X_t$ does not come into play explicitly during the update, so given any sequence of $A_{1:t}$ and $W_{1:t}$, it is always possible to construct a belief trajectory $K_{0:t}$ that faithfully captures the posterior probability distribution over the latent states.
    
    Using this, we can explicitly construct $K_1$ given $K_{0}$, $W_1$ and $A_1$. Hence we can obtain that
    \begin{equation}
        \Pr(X_2|W_1,A_1,\mu_0)=\Pr(X_2|W_1,A_1,K_0=\mu_0) = \Pr(X_2|K_1,K_0=\mu_0)
    \end{equation}
    Next, assuming that this equivalence holds for an arbitrary time step $t$, 
    \begin{equation}
    \label{eq:assume}
    \Pr(X_t|W_{1:t-1},A_{1:t-1},\mu_0)=\Pr(X_t|K_{t-1},K_0=\mu_0)~.
    \end{equation}
    We show that this also holds for $t+1$:
    \begin{equation}
        \Pr(X_{t+1}|W_{1:t},A_{1:t},\mu_0) = \Pr(X_{t+1}|K_t,K_0=\mu_0)
    \end{equation}
    We first simplify the RHS using Eq.~\eqref{eq:assume}
    \begin{equation}
        \begin{split}
        \Pr(X_{t+1}|W_{1:t},A_{1:t},\mu_0) &= \Pr(X_{t+1}|W_t,A_t,K_{t-1},K_0=\mu_0)
        \end{split}
    \end{equation}

We can expand the term $\Pr(X_{t+1}|W_t,A_t,K_{t-1},K_0=\mu_0)$ by marginalizing over the latent state $S_t$:
\begin{equation}
    \Pr(X_{t+1}|W_t,A_t,K_{t-1},K_0=\mu_0)=\sum_{s\in \Ss}\Pr(X_{t+1}|S_t=s)\Pr(S_t=s|W_t,A_t,K_{t-1},K_0=\mu_0)
\end{equation}
From Eq.~\eqref{eq:belief_traj}, we know that $\Pr(S_t=s|W_t,A_t,K_{t-1},K_0=\mu_0)$ is precisely the definition of the $s$-th element of the updated belief state $[K_t]_s$. Furthermore, due to Markovianity, $X_{t+1}$ depends only on the history through the current latent state $S_t$. Therefore  
\begin{equation}
        \begin{split}
    \Pr(X_{t+1}|W_{1:t},A_{1:t},\mu_0) &= \Pr(X_{t+1}|W_t,A_t,K_{t-1},K_0=\mu_0)\\
        &=\sum_{s\in \Ss}\Pr(X_{t+1}|S_t=s)[K_t]_s\\
        &=\Pr(X_{t+1}|K_t,K_0=\mu_0)~.
        \end{split}
    \end{equation}
\end{proof}
\subsection{Formal Equivalence}
\label{sec:equivalence}
While the technique of replacing observation histories with belief states is standard in POMDP literature~\cite{smallwood1973optimal,puterman2014markov}, we provide an explicit derivation below adapted to our thermodynamic framework, directly linking the global multi-time state to the agent's value function.

The quantity we wish to maximize is the average cumulative work extracted over $L$ time steps, defined as the time-ordered free energy (TOFE):
\begin{align}
    \F^{(1:L)}_{\text{TO}}(\rho^{(1:L)})\!\coloneqq\!\max_{\Gamma}\E_{\mathbf{H}_L|\Gamma}\left[\sum_{t=1}^LW_t\right]~.
\end{align}
By the linearity of expectation, we can express the total expected work as the sum of the expected work at each time step:
\begin{align}
    \F^{(1:L)}_{\text{TO}}(\rho^{(1:L)})\!\coloneqq\!\max_{\Gamma}\sum_{t=1}^L\E_{\mathbf{H}_L|\Gamma}\left[W_t\right]~.
\end{align}
The immediate work $W_t$ at time $t$ depends strictly on the underlying post-processed quantum state of subsystem $Q_t$ and the agent's action $A_t = \Gamma(\mathbf{H}_{t-1})$, which is dictated by the observable history up to that point. Using the law of total expectation, we can marginalize over the future and condition the immediate work exclusively on the history $\mathbf{H}_{t-1}$:
\begin{align}
    \F^{(1:L)}_{\text{TO}}(\rho^{(1:L)})\!\coloneqq\!\max_{\Gamma}\sum_{t=1}^L\E_{\mathbf{H}_{t-1}|\Gamma}\left[\E_{\mathbf{H}_{L}|\Gamma}(W_t|\mathbf{H}_{t-1},A_t= \Gamma(\mathbf{H}_{t-1}))\right]~.
\end{align}
As established in Lemma~\ref{lemma:sufficiency}, the belief state $K_{t-1}$ is a sufficient statistic for the observable history $\mathbf{H}_{t-1}$ regarding the future output of the system. Furthermore, as shown in Eq.~\eqref{eq:belief_traj}, the belief state can be iteratively updated given an initial prior $K_0=\mu_0$ and the specific history trajectory. Therefore, any probability distribution over the histories $\mathbf{H}_{t-1}$ uniquely induces a probability distribution over the belief states $K_{t-1}$. 

From Equation~\eqref{eq:final_rewards}, we see that the expected work $W_t$ depends only on the action $A_t$ and the distribution of $X_t$. Because $K_{t-1}$ is a sufficient statistic, it predicts the same distribution of $X_t$ as the full history $\mathbf{H}_{t-1}$. Notice, however, that the space of possible histories is generally much larger than the space of belief states; multiple distinct histories can map to the same belief state $K_{t-1}$. Since expected work depends on history \emph{only} through the distribution of $X_t$, a policy gains no advantage by distinguishing between histories that yield the same belief. Consequently, any history-dependent policy can be replaced---without loss of expected work---by a deterministic, belief-dependent policy that assigns the same action to all realizations of $\mathbf{H}_{t-1}$ sharing the same $K_{t-1}$~\cite{aastrom1965optimal, smallwood1973optimal}. This sufficiency allows us to perfectly replace conditioning on the full, growing history with conditioning on the compact belief state.


\begin{equation}
    \E_{W_{t}|\Gamma}
    [W_t|\mathbf{H}_{t-1},A_t]
    =\E_{W_{t}|\Lambda}
    [W_t|K_{t-1},A_t=\Lambda(K_{t-1}(\mathbf{H}_{t-1}))]~,
\end{equation}
where $\Lambda$ is the belief-dependent policy that maps beliefs to actions.

By definition in Eq.~\eqref{eq:final_rewards}, the expected immediate work extracted when taking action $A_t$ while possessing belief state $K_{t-1}$ is exactly $\langle W(K_{t-1},A_t)\rangle$. Substituting this into our expectation over the belief states yields:
\begin{equation}
\F^{(1:L)}_{\text{TO}} (\rho^{(1:L)}) = \max_{\Lambda}\sum_{t=1}^L \E_{K_{t-1}|\Lambda}\left[\langle W(K_{t-1},A_t)\rangle|K_0=\mu_0\right]~.
\end{equation}
The condition on $K_0=\mu_0$ reflects the agent's prior, i.e., the inference the agent has with $\mathbf{H}_0\coloneq\emptyset$ given knowledge of the HMM; this is standard in the computational mechanics literature~\cite{ellison2009prediction,travers2011exact,riechers2018spectral}.
Finally, by bringing the summation back inside the expectation, we arrive at the exact definition of the optimal global value function optimized by our dynamic programming algorithm:
\begin{equation}
\F^{(1:L)}_{\text{TO}} (\rho^{(1:L)}) = \max_{\Lambda}\E_{K_{1:L-1}|\Lambda}\left[\sum_{t=1}^L \langle W(K_{t-1},A_t)\rangle\bigg|K_0=\mu_0\right] = \tilde{V_1}^*(\mu_0)~.
\end{equation}

\section{Optimality of Backward Iteration}
\label{App:optimality_DP}
Here we present the explicit algorithm used in backward iteration, followed by the proof of its optimality. The pseudocode of the algorithm can be found in Alg.~\ref{alg:DPP_for_work}. For each time step, the optimization search is performed over the same number of belief states and possible actions. Therefore, the algorithm is $\mathcal{O}(L)$ in the total number of time steps $L$. To implement the dynamic programming approach computationally, we discretize the continuous belief space into $N$ finite states, denoted $\gbm\eta^{(i)}$, and limit the agent to $M$ finite actions, $a^{(j)}$.
\begin{algorithm}[htbp!]
	\caption{\textsf{Pseudocode for work extraction}} 
    \label{alg:DPP_for_work}
  \textbf{input:} discretized set of belief states $\mathcal{K}_{\text{DP}}\coloneqq \{\gbm \eta^{(i)}\}_{i=1}^{N}$, set of possible actions $\A \coloneqq \{a^{(j)}\}_{j=1}^{M}$, reward function $\mathcal{R}(\gbm\eta_{t-1},a_t)$, the underlying HMM, and the total time $L$ of the process.
    \\

    \emph{Set final value to 0 as there are no more future rewards} \\
    Set $V^{*}_{L+1}(\gbm \eta^{(i)}_{L}) \coloneqq 0$ for all $\gbm \eta^{(i)}$ \\

    \For {$t = L, L-1, \cdots, 1$}{
    \For {belief state $\gbm \eta^{(i)}_{t-1} \in \mathcal{K}_{\text{DP}}$ from $i=1$ to $N$}{
    \For {action $a_{t}^{(j)} \in \A$ from $j=1$ to $M$}{
    
    \emph{Find next state following Eq. \eqref{eq:bayesian_update}} \\
    Find $\gbm \eta^{(k)}_{t}$ from $\gbm \eta^{(i)}_{t-1}$, $a_{t}^{(j)}$ and possible values of $w_t$\\

    \emph{Calculate expected value for each action} \\
    $V_{t}(\gbm \eta^{(i)}_{t-1}, a^{(j)}_{t}) \coloneqq \mathcal{R}(\gbm \eta^{(i)}_{t-1},a^{(j)}_t) + \mathbb{E}_{K_t|\gbm\eta^{(i)}_{t-1},a^{(j)}_t}~\left[V^{*}_{t+1}(K_{t})\right]$
    }   

    \emph{Find best value and action from all actions} \\
    Set $V^{*}_{t}(\gbm \eta^{(i)}_{t-1}) \coloneqq \max_{a^{(j)}_{t}}{V_{t}(\gbm \eta^{(i)}_{t-1}, a^{(j)}_{t})}$ \\
    Set $a^{*}_{t}(\gbm \eta^{(i)}_{t-1}) \coloneqq \argmax_{a^{(j)}_{t}}{V_{t}(\gbm \eta^{(i)}_{t-1}, a^{(j)}_{t})}$
    }
    }

    \textbf{output:} policy $\Lambda^* = \{ \mathcal{K}_{\text{DP}_{t-1}}\to \A_{t} \}_{t=1}^{L}$
\end{algorithm}

To prove that the policy obtained via DP is indeed optimal, we first note that the optimal policy maximizes the global value function. As in Eq.~\eqref{eq:overall}, we denote the optimal policy by $\Lambda^* \coloneqq \{a^*_1(\gbm \eta^{(i)}),a^*_2(\gbm \eta^{(i)}), \ldots\}_i$, where each $a^*_t(\gbm \eta^{(i)})$ is the optimal action for belief state $\gbm \eta^{(i)}$ at time $t$. We have previously denoted the highest global value function as $\tilde V_1^*(\gbm\eta^{(i)})$ as the maximum of $\tilde V_1(\gbm\eta^{(i)})$ over all possible actions. Since $\Lambda^*$ is a possible policy, it has its own global value function $\tilde V_1^{(\Lambda^*)}(\gbm\eta^{(i)})$, and hence $\tilde V_1^{(\Lambda^*)} (\gbm\eta^{(i)})\leq \tilde V_1^*(\gbm\eta^{(i)})$. It remains to prove that $\tilde V_1^{(\Lambda^*)}(\gbm\eta^{(i)}) \geq \tilde V_1^*(\gbm\eta^{(i)})$. If this is satisfied, then $\tilde V_1^{(\Lambda^*)}(\gbm\eta^{(i)}) = \tilde V_1^*(\gbm\eta^{(i)})$ and $\Lambda^*$ is the optimal policy that yields the highest global value function. 

For any $a_{1:L}$ in any policy $\Lambda$ and all $\gbm \eta^{(i)}$, the global value function is formally written as
\begin{equation}
    \tilde V_1^{(\Lambda)}(\gbm\eta^{(i)}) = \E\left(\sum_{t=1}^{
    L}\mathcal{R}( K_{t-1},a_t)+V_{L+1}\Bigg|K_0=\gbm\eta^{(i)}\right)
\end{equation}\\
 We can rewrite this as a conditional expectation conditioned on the penultimate belief state, $K_{L-1}$. This is possible because the metadynamics of belief states are Markovian. This can then be done iteratively backward in time, as demonstrated in Eq.~\eqref{eq:proof}.
\begin{equation}
    \begin{split}
        \label{eq:proof}
        V_1^{(\Lambda)}(\gbm\eta^{(i)}) &= \E\left(\sum_{t=1}^{L} \mathcal{R}( K_{t-1},a_t)+V_{L+1}\Bigg|K_0=\gbm\eta^{(i)}\right)\\
        &=\E\left[\E\left(\sum_{t=1}^{L-1}\mathcal{R}(K_{t-1},a_t)+\mathcal{R}(K_{L-1},a_{L})+V_{L+1}\Bigg|K_0=\gbm\eta^{(i)}\right) \Bigg |K_{L-1}\right]\\
        &= \underbrace{\E\left(\sum_{t=1}^{L-1}\mathcal{R}(K_{t-1},a_t)\Bigg |K_0 = \gbm\eta^{(i)}\right)}_{I_{L-1}} +\E\left[\underbrace{\E\left(\mathcal{R}(K_{L-1},a_{L})+V_{L+1}\Bigg| K_{L-1}\right)}_{J_{L}(K_{L-1},a_{L})}\Bigg | K_0=\gbm\eta^{(i)}\right]\\
        &\leq I_{L-1} + \E\left[\max_{a\in\A} J_{L}(K_{L-1},a_{L})\Bigg | K_0=\gbm\eta^{(i)}\right]\\
        &= I_{L-1} + \E\left[J_{L}(K_{L-1},a^*_{L})\Bigg | K_0=\gbm\eta^{(i)}\right]\\
        &= I_{L-1}+\E\left(V_{L}(K_{L-1})\Big|K_0=\gbm\eta^{(i)}\right)\\
        &=\E\left(\sum_{t=1}^{L-2}\mathcal{R}(K_{t-1},a_t)\Bigg |K_0 = \gbm\eta^{(i)}\right)+ \E\Big[\E\left(\mathcal{R}(K_{L-2},a_{L-1})+V_{L}(K_{L-1})\Big|K_{L-2}\right)\Big|K_0=\gbm\eta^{(i)}\Big]\\
        &\leq I_{L-2}+\E\Big[\max_{a\in\A}J_{L-1}(K_{L-2},a_{L-1})\Big|K_0=\gbm\eta^{(i)}\Big]\\
        &=I_{L-2}+\E\left(V_{L-1}(K_{L-2})\Big|K_0=\gbm\eta^{(i)}\right)\\
        &\vdots\\
        &=V_1^{(\Lambda^*)}(\gbm\eta^{(i)})~,
    \end{split}
\end{equation}
where $J_{t} (\gbm\eta^{(i)}, a_{t}) := \mathbb{E} \left[ \mathcal{R} \left(\gbm\eta^{(i)},a_{t}\right) + V_{t+1}(K_{t}) |K_{t-1}=\gbm\eta^{(i)}\right]$ is the objective function and $V_t(\gbm\eta^{(i)}) = \max_{a_{t} \in \mathcal{A}} J_{t} (\gbm \eta^{(i)}, a_{t})$ is the value function of the belief state $\gbm\eta^{(i)}$ at time $t$.
The first inequality comes from the maximization of $J_{L}$ over all actions, and the sixth line follows from the definition of $V_t$. Notice that after each maximization, the optimal action can be generated, forming the optimal policy $\Lambda^*$. This proof extends to $t=1$, thereby proving that $\tilde V_1^{(\Lambda)}(\gbm\eta^{(i)})\leq \tilde V^{(\Lambda^*)}_1(\gbm\eta^{(i)})$ for any possible policy. Since $\tilde V_1^*$ is defined as $\max_{a\in\A} \tilde V^{(\Lambda)}_1(\gbm\eta^{(i)})$, this implies that $\tilde V_1^*(\gbm\eta^{(i)})\leq \tilde V_1^{(\Lambda^*)}(\gbm\eta^{(i)})$. Together with the previous statement that $\tilde V_1^{(\Lambda^*)} (\gbm\eta^{(i)})\leq \tilde V_1^*(\gbm\eta^{(i)})$, this proves that $\tilde V_1^{(\Lambda^*)} (\gbm\eta^{(i)})= \tilde V_1^*(\gbm\eta^{(i)})$. Hence, $\Lambda^*$ is indeed the optimal policy.

\section{Dynamic Programming Principle}
\label{App:DPP}

Now, we introduce the generic Markovian dynamic programming method. From this point onward, all time indices will be written as subscripts, while the exact indices of the states will be written as superscripts in parentheses to avoid confusion.

Dynamic programming is a mathematical optimization method that recursively breaks down a larger problem into simpler subproblems. This systematic decomposition often reduces computational cost at the expense of requiring more memory for the algorithmic implementation. For general stochastic optimal control problems \cite{bellman1965dynamic}, dynamic programming methods have been shown to produce the optimal results. \\

A stochastic optimal control problem is an optimization problem to find a control, which can be viewed as an agent's action, that optimizes an objective function (usually the expectation of the reward or cost function) for a stochastic process over a period of time. To formalize the optimization, an overall process can be defined as an extension of a stochastic process $..., X_{t-1}, X_{t}, X_{t+1}, ...$ with the addition of an agent's action (the controls) at each time step $A_{t}$ that affects the future outcome of the process, resulting in the process that can be written as $..., X_{t-1}, A_{t-1}, X_{t}, A_{t}, X_{t+1}, ...$. \\

For a Markovian process with a discrete finite time horizon from $t=0$ to $t=L \in \mathbb{Z}^{+}$, where the states $x_{t}$ and actions $a_{t}$ at each time $t$ belong to the discrete sets $\mathcal{X}$ and $\mathcal{A}$, the random variable $X_{t}$ depends solely on the previous state $x_{t-1}$ and the previous action $a_{t-1}$. The action $a_{t}$ at any time $t$ is also a simplified form, with the full form being $a_{t}(x_{t})$, as it depends only on the previous state $x_{t}$. Likewise, the reward at each time $\mathcal{R}(x_{t+1}, a_{t}, x_{t})$ depends on the current and next time steps. \\

With these, we can define the objective function to be optimized over as 
\begin{equation} \label{eq:obj-func}
        J(x_{0}) = \mathbb{E} \left[ \sum^{L-1}_{t=0} c^{t} \cdot \mathcal{R}( X_{t}, A_{t}, X_{t+1} ) \right]~,
\end{equation}
where $x_{0}$ is the initial value for $t=0$, $X_{t}$ and $A_{t}$ are random variables for all $t \neq  0$, and $0<c<1$ is a fixed scaling factor, usually used over an infinite time horizon to prevent the objective function from diverging. Since we are working over a finite time horizon, we can usually set $c = 1$. \\

The value function can then be defined as the maximized objective function:
\begin{equation} \label{eq:mark-val-func}
    V(x_{0}) := \sup_{(A_{t})^{L-1}_{t=0} \in \bar{\mathcal{A}}} \mathbb{E} \left[ \sum^{L-1}_{t=0} c^{t} \cdot \mathcal{R}( X_{t}, A_{t}, X_{t+1} ) \right]~,
\end{equation}
where $\bar{\mathcal{A}}$ is the whole action space from $t=0$ to $t=L-1$. This can be quite intractable in general, with $2L$ random variables and $L$ choices to optimize over. Hence, it is common to use dynamic programming to simplify the problem. \\

\noindent \textbf{Iterative approach}\m
In the iterative approach, the value function can be calculated via backward induction to compute the optimal policy $\Lambda^*$. We first define an objective function $J_{t} (x_{t}, a_{t})$ and a value function $V_{t} (x_{t})$ for each time step $t$. The objective function at time $t$ is essentially the expected reward for the future given the current state $x_{t}$ and action $a_{t}$ of the current time step, with $c$ determining the relative weighting of near-term and long-term rewards. Hence, it can be written as
\begin{equation} \label{eq:mark-obj-func-t}
    J_{t} (x_{t}^{(i)}, a_{t}^{(j)}) := \mathbb{E} \left[ \mathcal{R}\left(x_{t}^{(i)},a_{t}^{(j)},X_{t+1}\right) + c V_{t+1}(X_{t+1}) \right]~.
\end{equation} 

This results in the value function $V_{t} (x_{t}) = \max_{a_{t}^{(j)} \in \mathcal{A}} J_{t}(x_{t}^{(i)}, a_{t}^{(j)})$ based on the optimal action of $\hat{a_{t}}(x_{t}) := \argmax_{a_{t}^{(j)} \in \mathcal{A}} J_{t}(x_{t}^{(i)}, a_{t}^{(j)})$. Since all possible rewards would have already been collected at the final time $L$ and no more rewards can be obtained, the final value function $V_{L}(x_{L})$ is usually set to 0 for all $x_{L}$. For the remaining value functions, 

\begin{equation}
    \label{eq:back_induction}
    V_{t}(x_t^{(i)}) = \max_{a_{t}^{(j)}\in\A}\mathbb{E}\left[\mathcal{R}\left(x_{t}^{(i)}, a_{t}^{(j)}, X_{t+1} \right)+ c V_{t+1}\left(X_{t+1} \right)\right] ~.
\end{equation}

We can then recursively calculate the objective function $J_{t} (x_{t}^{(i)}, a_{t}^{(j)})$ from time $L-1$ to 0, maximizing this function to find the value function $V_{t} (x_{t}^{(i)})$ and the optimal action $\hat{a_{t}}(x_{t}^{(i)})$ at time $t$ for the specific state-action pair $x_{t}^{(i)}, a_{t}^{(j)}$. The resulting set of actions defines the optimal agent policy, with $V_{0}(x_{0}^{(i)}) = V(x_{0}^{(i)})$. \\

\section{Analytics for the DP Policy}
\label{App:analytics}

For a stationary process where transition probabilities among states and rewards do not change over time, the optimal policies can still be time-dependent over a finite horizon (finite $L$), due to the presence of boundary effects. However, over an infinite horizon ($L\to\infty$), the optimal policies are time-independent~\cite{sutton1998reinforcement,scherrer2012use}. 
The optimal policy for a finite-horizon MDP can also be divided into two different phases, the stationary and non-stationary phases, which correspond to the time-independent and time-dependent phases, respectively. The non-stationary phase occurs when future rewards diminish towards the end of the process, i.e., when $t$ is close to $L$. This can be illustrated by considering a \emph{two-step process}, i.e., a process that lasts for two time steps.

\subsection{Non-stationary phase}
\label{sec:transient}

At time $t=2$, which is the last step of the process, the value function $V_{3}$ is 0 regardless of the belief state. Hence, the action taken at $t=2$ for a DP agent will minimize the expected dissipation,
\begin{equation} \label{eq:last_step}
 a_{\text{2}}^*(K_1=\gbm\eta^{(i)})=\argmax_{a^{(j)}\in\A}\left[-\frac{1}{\beta}\D(\xi_{\gbm\eta^{(i)}}\|\rho^{(j)})\right]~.
\end{equation}
This achieves the maximum value of 0 when $\rho^{(j)}=\xi_{\gbm\eta^{(i)}}$. Now, we move on to the first time step, $t=1$. The value function that the agent optimizes becomes
\begin{equation}
\begin{split} \label{eq:V1}
    V_1(K_0=\gbm\eta^{(i)}) &= \max_{a^{(j)}\in\A}\bigg\{\frac{1}{\beta}\left[\D(\xi_{\gbm\eta^{(i)}}\|\gamma)-\D(\xi_{\gbm\eta^{(i)}}\|\rho^{(j)})\right]\\
    &+\E\left[\frac{1}{\beta}\D(\xi_{K_1}\|\gamma)\bigg|K_0 = \gbm\eta^{(i)},a^{(j)}\right]\bigg\}
\end{split}
\end{equation}
where the expectation is taken over the conditional distribution $\Pr(K_1|K_0=\gbm\eta^{(i)},a^{(j)})$. Therefore, the action taken by the DP agent at the first time step is given by
\begin{equation}
\begin{split} \label{eq:action_of_DP}
    a_{1}^*(K_0=\gbm\eta^{(i)})&= \argmax_{a^{(j)}\in\A}\bigg\{-\frac{1}{\beta}\D(\xi_{\gbm\eta^{(i)}}\|\rho^{(j)})\\
    &+\E\left[\frac{1}{\beta}\D(\xi_{K_1}\|\gamma)\bigg|K_0 = \gbm\eta^{(i)},a^{(j)}\right]\bigg\}~.
\end{split}
\end{equation}
The first term represents the expected dissipation of the current action $a_1$, while the second term represents the expectation of future reward. This creates a trade-off between the current reward and the future rewards that the DP agent has to optimize. In general, an action at the end of the process differs from those at the start, since the change in expected future reward becomes significant towards the end. This also depends on the expectation of future rewards. The closer the future reward is to a constant function, the smaller the boundary effect.

For the case of the perturbed coin, this can be illustrated in Fig.~\ref{fig:transient_action}.
\begin{figure}
    \centering
    \includegraphics[width = 0.4 \linewidth]{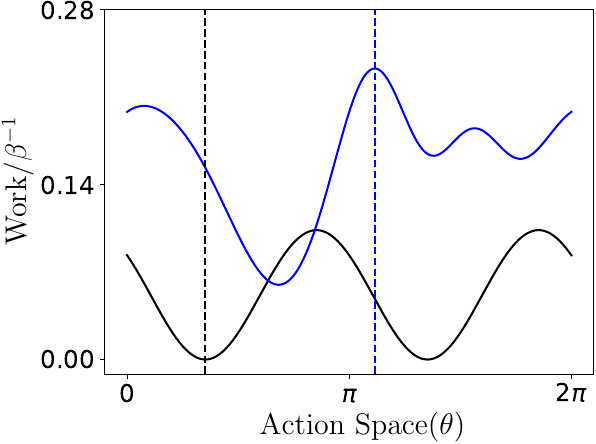}
    \caption{Graph of reward and dissipation, conditioned on the belief state $K = \mu_0=(1/2,1/2)$. The action space is parametrized by $\theta\in[0,2\pi]$. The blue line represents $V_1(K_0=\mu_0)$ as defined in Eq.~\eqref{eq:V1}, and the black line represents the dissipation incurred at the second time step, in Eq.~\eqref{eq:last_step}. The blue and black dotted lines correspond to the optimal actions taken at $t=1$ and $t=2$, respectively.}
    \label{fig:transient_action}
\end{figure}
We can see that the action taken in the second time step $a_2$ is chosen such that the dissipation is minimized, whereas $a_1$ is chosen at the point where the value function $V_1$ is maximized. It is evident that for a finite-horizon optimization, the optimal policy will be time-dependent since $a_2^*(\gbm\eta^{(i)})\neq a_1^*(\gbm\eta^{(i)})$. The closer the expectation of future reward is to a constant function with respect to the actions, the smaller the boundary effect of the non-stationarity of the optimal action.

In the case of the perturbed coin, the expectation reaches a constant in three cases.
\begin{enumerate}
    \item When $p=0.5$, where the process becomes a purely random process. The expected distribution of the future is uniform regardless of the current belief.
    \item When $r=0$, the classical limit where all states are orthogonal; one can measure along the eigenbasis to obtain perfect knowledge.
    \item When $r=1$, a trivial limit where the process emits identical states; hence, regardless of action, the expectation of the future is the same.
\end{enumerate}
Interestingly, when the boundary effect completely disappears even for the finite-horizon case, the optimal policy for all time steps becomes the same as the local-optimizing policy; there is no longer a need to be concerned about the future reward.

\section{Narrowing Down the Search Space}
\label{sec:narrow_search}
In this section, we derive the optimal target states for each protocol. The update of the belief state is entirely dependent on the set of probabilities $\{\bra{\lambda_i}\sigma^{(x)}\ket{\lambda_i}\}_i$. We will first show that for any chosen eigenbasis $\{\ket{\lambda_i}\}_i$ to which the protocol is tailored, the optimal state that results in the least dissipation is given by 
\begin{equation}
\label{eq:best_state}
    \rho_k^*=\sum_ip^*_i\ket{\lambda_i}\bra{\lambda_i},\quad p_i^*=\bra{\lambda_i}\xi_k\ket{\lambda_i} \quad\text{for}\quad i\in \{1,\cdots,d\}
\end{equation}
where $\xi_k$ is the expected state formed from $\gbm\eta^{(k)}$. Then we show that the complex degree of freedom can be ignored when choosing the eigenbases.
\begin{theorem}
\label{thm2}
For a given expected state $\xi_{\gbm\eta^{(k)}}$ and a chosen eigenbasis $\{\psi_i\}_i$, the state $\rho^*$ that is diagonal in this basis and minimizes the expected dissipation, $\D(\xi_{\gbm\eta^{(k)}}\|\rho^*)$, is given by 
\begin{equation}
\label{eq:opt_state}
\rho^* = \sum_i\lambda_i\ket{\psi_i}\!\bra{\psi_i}, \hquad \text{where} \hquad \lambda_i=\bra{\psi_i}\xi_{\gbm\eta^{(k)}}\ket{\psi_i}~.
\end{equation}    
\end{theorem}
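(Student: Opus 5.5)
The plan is to reduce the minimization of $\D(\xi_{\gbm\eta^{(k)}}\|\rho)$ over states $\rho$ diagonal in the fixed basis $\{\ket{\psi_i}\}_i$ to a classical relative-entropy minimization, and then apply Gibbs' inequality. Write $\xi\equiv\xi_{\gbm\eta^{(k)}}$ and parametrize any admissible state as $\rho=\sum_i q_i\ket{\psi_i}\!\bra{\psi_i}$ with $(q_i)$ a probability vector.

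The first step is to expand the relative entropy as
\[
\D(\xi\|\rho)=\tr(\xi\log\xi)-\tr(\xi\log\rho).
\]
The first term does not depend on $\rho$, so only the cross term matters. Since $\log\rho=\sum_i\log q_i\,\ket{\psi_i}\!\bra{\psi_i}$, we have
\[
\tr(\xi\log\rho)=\sum_i\lambda_i\log q_i, \qquad \lambda_i\coloneqq\bra{\psi_i}\xi\ket{\psi_i}.
\]
The off-diagonal elements of $\xi$ in this basis drop out. Here $(\lambda_i)$ is a probability vector, because $\xi\geq0$ and $\tr\xi=1$.

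The second step is to note that minimizing $\D(\xi\|\rho)$ is equivalent to maximizing $\sum_i\lambda_i\log q_i$ over the simplex. We can rewrite
\[
-\sum_i\lambda_i\log q_i=H(\lambda)+D_{\mathrm{KL}}(\lambda\|q),
\]
where $H$ is the Shannon entropy and $D_{\mathrm{KL}}$ the classical relative entropy. Gibbs' inequality gives $D_{\mathrm{KL}}(\lambda\|q)\geq0$, with equality if and only if $q=\lambda$. The unique minimizer is therefore $q_i=\lambda_i$, which is exactly Eq.~\eqref{eq:opt_state}.

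As a consistency check, the minimal value is
\[
\D(\xi\|\rho^*)=S(\Delta_\psi(\xi))-S(\xi)\geq0,
\]
where $\Delta_\psi$ is the dephasing map in the basis $\{\ket{\psi_i}\}_i$. This is the relative entropy of coherence, and it vanishes exactly when $\xi$ is already diagonal in that basis.

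I do not expect a real obstacle; the only care needed is with support. If some $q_i=0$ while $\lambda_i>0$, then $\D=+\infty$, so such choices are never optimal. If instead $\lambda_i=0$, the convention $0\log0=0$ means the choice $q_i=\lambda_i=0$ is still admissible, and the minimizer remains unique. A Lagrange-multiplier computation (stationarity gives $\lambda_i/q_i=\mu$, so $q\propto\lambda$) gives the same answer, but I would use the Gibbs-inequality route since it directly yields both global optimality and uniqueness.

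Finally, the problem depends on the basis only through the diagonal entries $\lambda_i$. Two bases that differ only by phases $\ket{\psi_i}\to e^{i\phi_i}\ket{\psi_i}$ give the same $\lambda_i$ and the same $\rho^*$, so these phases can be discarded, matching the remark preceding the theorem.
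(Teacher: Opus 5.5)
Your proof is correct and follows the same route as the paper: write $\D(\xi\|\rho)=-S(\xi)-\tr(\xi\ln\rho)$, observe that only the diagonal entries $\bra{\psi_i}\xi\ket{\psi_i}$ survive in the cross term, and maximize $\sum_i\lambda_i\ln q_i$ over the simplex. The paper only asserts that ``a simple analysis'' gives $q_i=\lambda_i$; your decomposition $-\sum_i\lambda_i\ln q_i=H(\lambda)+D_{\mathrm{KL}}(\lambda\|q)$ together with Gibbs' inequality supplies that step rigorously, and you also establish uniqueness and handle the support issues.
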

\begin{proof}
    First, notice that the Bayesian update is characterized solely by the work distribution:
    \begin{eqnarray}
        \bra{\psi_i}\sigma^{(x)}\ket{\psi_i},\quad \forall x\in\X~.
    \end{eqnarray}
   This also means that if the work distributions for two different protocols, $\mathcal{W}_{\rho}$ and $\mathcal{W}_{\rho'}$, are the same, then all the future statistics should remain the same. The only difference is then the dissipation incurred. 
For any $\rho$ to which the protocol is tailored, we can write the expected dissipation term as
\begin{equation}
    \D(\xi_k\|\rho) = -\tr(\xi_k\ln\rho) - S(\xi_k)~.
\end{equation}
Since the second term is independent of $\rho$, we simply wish to minimize the first term, i.e., we wish to find $\rho^*$ that maximizes $\tr(\xi_k\ln\rho)$. We expand the expression using the fact that $\rho=\sum_i\lambda_i\ket{\lambda_i}\bra{\lambda_i}$:
\begin{equation}
\label{eq:optimality_proof}
    \tr(\xi_k\ln\rho)  = \sum_i\ln \lambda_i\bra{\lambda_i}\xi_k\ket{\lambda_i}
\end{equation}
Now, to ensure that two protocols, $\mathcal{W}_{\rho}$ and $\mathcal{W}_{\rho'}$, induce the same future statistics in the Bayesian sense, we require that $\bra{\lambda_i}\xi_k\ket{\lambda_i} = \bra{\lambda'_i}\xi_k\ket{\lambda'_i}$ for all $i\in\{0,1\}$ and $\{\lambda_i\}_i,\{\lambda'_i\}_i$ are the eigenvectors of $\rho$ and $\rho'$ respectively. Thus, the maximum is achieved by changing the eigenvalues. A simple analysis of Eq.~\eqref{eq:optimality_proof} shows that it reaches maximum value when $\lambda_i = \bra{\lambda_i}\xi_k\ket{\lambda_i}$, which establishes Theorem~\ref{thm2}.
\end{proof}

\section{Bounds on the Free-Energy Rate}
\label{app:free_energy_rate_bound}
It is necessary to show that there is a separation between the best possible adaptive local strategy and the global collective strategy. Therefore, we compare the asymptotic work-extraction rate based on the policy obtained via DP with the free energy rate of the quantum state. In the previous section, we have already discussed the asymptotic rate of the DP-agent; here we show that even the lower bound of the free-energy rate is higher, hence establishing a separation. 

We first define the so-called free energy rate. Consider a quantum state in Eq.~\eqref{eq:fuel_state}.
The free energy rate can then be defined as
\begin{equation}
    r \coloneqq \lim_{L\to\infty}\frac{\beta^{-1}\D(\rho^{(1:L)}\|\gamma^{\otimes L})}{L}~,
\end{equation}
if the limit on the right-hand side exists and is well-behaved. Assuming that this limit exists, we can expand the expressions to obtain 
\begin{equation}
\begin{split}
r=&\lim_{L\to\infty}\frac{\beta^{-1}\D(\rho^{(1:L)}\|\gamma^{\otimes L})}{L}\\
    =& \lim_{L\to\infty}\frac{\beta^{-1}}{L}\left[-\tr(\rho^{(1:L)}\ln\gamma^{\otimes L})-S(\rho^{(1:L)})\right]\\
    =& \lim_{L\to\infty}\frac{\beta^{-1}}{L}\left[-\tr(\rho^{(1:L)}\beta(LF\mathbb{I}-\mathcal{H}^{(L)}))-\beta^{-1}S(\rho^{(1:L)})\right]\\
    =&\lim_{L\to\infty}\frac{1}{L} \left[\tr(\rho^{(1:L)}\mathcal{H}^{\otimes L})-S(\rho^{(1:L)})\right]-\F_{\text{eq}}~,
\end{split}
\end{equation}
where $\mathcal{H}^{(L)}\coloneqq\sum_{l}^L\bigotimes_{i=1}^{l-1}\mathds{1}_{Q_i}\otimes\mathcal{H}_{Q_l}\bigotimes_{i'=l+1}^L \mathds{1}_{Q_{i'}}$. For a degenerate Hamiltonian in the form of $\mathcal{H}=\omega\mathbb{I}/d$, the expression can be further simplified to
\begin{equation}
\label{eq:free_energy_rate}
    r = \beta^{-1}\left[\lim_{L\to\infty} \frac{-S(\rho^{(1:L)})}{L}+\ln d\right]~.
\end{equation}
So then what we really need to find is just the expression $\lim_{L\to\infty} \frac{-S(\rho^{(1:L)})}{L}$, which is essentially the entropy rate of the quantum system. Now we can likewise denote the combined systems of $Q_1\cdots Q_{L-1}$ as $B$ and $Q_L$ as $A$. Then, by the definition of quantum conditional entropy, we can obtain that
\begin{equation}
    S(\rho_{BA})=S(\rho_B)+S(A|B)_\rho~.
\end{equation}
Thus we conclude that in the limit $L\to\infty$
\begin{equation}
    s_{vN} = S(A|B)_\rho~.
\end{equation}
The von Neumann entropy rate is difficult to calculate; numerically, it is also difficult to compute since it involves eigen-decomposition of density matrices whose dimensions scale exponentially with $L$. We therefore turn to finding the upper bound of the entropy rate, which will then give us the lower bound on the free energy rate in Eq.~\eqref{eq:free_energy_rate}.

To find a meaningful upper bound on the entropy rate, we use the data-processing inequality, which states that quantum relative entropy cannot increase under a quantum channel. Here we can define a quantum channel that acts on $B$ while leaving $A$ untouched, i.e.,
\begin{equation}
    \eta_{AD} = (\text{id}_A\otimes \mathcal{E}_B)\rho_{AB} , \quad \eta_A\otimes\eta_D = (\text{id}_A\otimes \mathcal{E}_B)(\rho_A\otimes\rho_B)~.
\end{equation}
We can then show the quantum data processing inequality
\begin{equation}
    \begin{split}
        S(A|D)_{\eta} &= S(\eta_A)-\D(\eta_{AD}\|\eta_A\otimes\eta_D)\\
        &=S(\rho_A) - \D\left[ (\text{id}_A\otimes \mathcal{E}_B)(\rho_{AB})\| (\text{id}_A\otimes \mathcal{E}_B)(\rho_A\otimes\rho_B)\right]\\
        &\geq S(\rho_A)-\D (\rho_{AB}\| \rho_A\otimes\rho_B)\\
        &= S(A|B)_\rho
    \end{split}
\end{equation}
Any measurement on $B$ defines a valid quantum channel and therefore provides an upper bound on $S(A|B)$ through the data-processing inequality. We use the Helstrom measurement as one explicit choice; it need not be optimal for this entropy bound.

The Helstrom measurement minimizes the error probability when distinguishing two quantum states. Based on the structure of the quantum state, the most information an agent can obtain about the current latent state of the system comes from distinguishing the very last quantum state, which is either $\sigma^{(0)}$ indicating that the current latent state is in $S$, or $\sigma^{(1)}$, indicating that it is in $S'$. Here, $\tau_t^{(s)},s={0,1}$, denotes the $t$-partite quantum states ending with $\sigma^{(s)}$. 
In this case, the Helstrom measurement can be expressed as
\begin{equation}
    M_0 = \sum_{\lambda_i>0} \ket{\lambda_i}\!\bra{\lambda_i},\quad M_1 = \sum_{\lambda_i<0} \ket{\lambda_i}\!\bra{\lambda_i}~,
\end{equation}
where $p_0\tau_t^{(0)} -p_1\tau_t^{(1)}=\sum_i\lambda_i\ket{\lambda_i}\!\bra{\lambda_i}$ is the eigen-decomposition. 
The post-measurement state can then be written as
\begin{equation}
\begin{split}
    \eta_{AD}^{(t+1)}&=\sum_{s,y} \mu(s)\tr\left[M_y\tau_t^{(s)}\right] \ket{y}\!\bra{y}_D\otimes \xi_A^{(s)}\\
    &=\sum_yp_{y,t}\ket{y}\!\bra{y}_D\otimes {\chi_t^{(y)}}_A
\end{split}
\end{equation}
where $p_{y,t}=\sum_s\mu_0(s)\tr\left[M_y\tau_t^{(s)}\right]$ and $\chi_t^{(y)} = \frac{1}{p_{y,t}}\sum_s\mu_0(s)\tr(M_y\tau_t^{(s)})\xi^{(s)}$, and $\xi^{(s)}$ is the expected state conditioned on the latent state of the HMM, specifically,
\begin{equation}
    \xi^{(s)} = (1-p)\sigma^{(s)}+p\sigma^{(1-s)}~.
\end{equation}
The upper bound on the entropy rate can then be written as
\begin{equation}
    s_{vN} = S(A|B)_\rho \leq S(A|D)_\eta = \sum_y p_{y,t}S\left[\chi_t^{(y)}\right]
\end{equation}
which in turn provides us with the lower bound on the free energy rate,
\begin{equation}
    r \geq \beta^{-1}\lim_{t\to\infty}\left[ \ln d - S(A|D)_\eta\right]~,
\end{equation}
where the dependence on $t$ comes from $\tau_t^{(s)}$. For the numerical simulation, taking $t=12$ is enough to create a gap in performance.

\section{Derivation of causal dissipation}
\label{app:causal_dissipation}
We want to find the fundamental difference between a quantum agent that can extract work coherently across $L$ time steps and a classical agent that can only act locally at each step. 

We first consider a bipartite system and use this opportunity to restate the result from~\cite{zurek2003quantum,brodutch2010quantum} for a bipartite state
$\rho^{(1:2)}$. Assume that the agent measures in the time order, so from $Q_1\to Q_2$. For a classical demon who can operate sequentially and carry classical information, the extracted work can be quantified as 
\begin{equation}
\frac{W_C}{k_\mathrm{B}T} = \ln (d_{Q_1}d_{Q_2})-H(p_{\Pi_{Q_1}})-S(\rho_{Q_2}|\Pi_{Q_1})
\end{equation}
where $p_{\Pi_{Q_1}}$ is the probability distribution of measurement outcomes when applying the projector $\Pi_{Q_1}$ to the subsystem $Q_1$, while $H(X)$ is the Shannon entropy of the distribution of $X$. One can imagine measuring a quantum state of system $Q_1$ using $\Pi_{Q_1}$, collapsing the state to a pure state and extracting $\beta^{-1}\ln d_{Q_1}$ units of work, but having to spend $\beta^{-1}H(p_{\Pi_{Q_1}})$ units of work to reset the memory of the agent. The last term,
\begin{equation}
    S(\rho_{Q_2}|\Pi_{Q_1})=\sum_ip_{\Pi^{(i)}_{Q_1}} S(\rho_{Q_2|\Pi_{Q_1}^{(i)}})~,
\end{equation}
represents the uncertainty in $Q_2$ after the agent measures $Q_1$, $\rho_{Q_2|\Pi_{Q_1}^{(i)}}$ is the normalized post-measurement quantum state after obtaining result $i$ in the first measurement on $Q_1$, i.e.,
\begin{equation}
    \rho_{Q_2|\Pi_{Q_1}^{(i)}} = \frac{\Pi^{(i)}_{Q_1}\rho^{(1:2)}\Pi^{(i)}_{Q_1}}{\tr\left[\Pi^{(i)}_{Q_1}\rho^{(1:2)}\right]}~.
\end{equation}
Note that we omit the explicit tensor product expression for the projector to reduce clutter, but in principle, the full projector is $\Pi^{(i)}_{Q_1}\otimes\mathbb{1}_{Q_2}$.

On the other hand, a quantum agent will quite straightforwardly extract 
\begin{equation}
    \frac{W_Q}{k_\mathrm{B}T} = \ln (d_{Q_1}d_{Q_2})-S(\rho^{(1:2)})~.
\end{equation}
The difference between the classical and quantum demon is then given by 
\begin{equation}
    \frac{W_Q-W_C}{k_\mathrm{B}T}=H(p_{\Pi_{Q_1}})+S(\rho_{Q_2}|\Pi_{Q_1})-S(\rho^{(1:2)})~.
\end{equation}

One can then optimize the measurement $\Pi_{Q_1}$ to minimize this difference, which then gives us the expression for \emph{thermal discord}.
\begin{equation}
    \delta(Q_2|Q_1) \coloneqq \min_{\Pi_{Q_1}}\left[H(p_{\Pi_{Q_1}})+S(\rho_{Q_2}|\Pi_{Q_1})\right]-S(\rho^{(1:2)})~.
\end{equation}
We can generalize this quantity to a $L$-partite case and obtain the causal dissipation, which quantifies the fundamental lower bound on the difference between the work extracted by a fully quantum agent and an agent with only classical memory. 

Suppose the agent is given the multi-time state $\rho^{(1:L)}$ in Eq.~\eqref{eq:fuel_state}. The agent has to operate in temporal order, i.e., $Q_1\to Q_2\to\cdots\to Q_L$. At time step $t$, the agent is allowed to vary its measurement basis conditioned on the history of previous measurements and outcomes, $\tilde{\mathbf{H}}_{t-1}\coloneqq(\Pi_{Q_1},O_1,\cdots,\Pi_{Q_t-1},O_{t-1})$. It carries out measurements in the form of $\Pi_{Q_t}=\Pi_{Q_t|\tilde{\mathbf{H}}_{t-1}}$.  The causal dissipation can then be defined as
\begin{equation}
\label{eq:causal_def}
    \beta\delta(Q_{\overrightarrow{1:L}}) \coloneqq \min_{\overrightarrow{\gbm\Pi}} \mathbb{E}_{\tilde{\mathbf{H}}_L|\overrightarrow{\gbm\Pi}} \left[ \sum_{t=1}^{L-1} H(O_t | \tilde{\mathbf{H}}_{t-1}) + S(\tilde{\rho}_{Q_L | \tilde{\mathbf{H}}_{L-1}}) \right] - S(\rho^{(1:L)})~.
\end{equation}
where $\gbm{\overrightarrow{\Pi}}\coloneqq\{\Pi_{Q_t|\tilde{\mathbf{H}}_{t-1}}\}_{i=1}^L$ is the sequence of measurements applied to subsystem $Q_t$ conditioned on histories, $\tilde{\mathbf{H}}_{t-1}$. 
The conditional Shannon entropies account for the increase in the agent's memory due to the measurement, with $H(p_{\Pi_{A_1}}) \coloneqq -\sum_{o_1}p^{(o_1)}\ln p^{(o_1)}$.

\subsection{Equivalence between work deficit and causal dissipation}
\label{app:prove_dissipation}
In this subsection, we will prove Theorem~\ref{thm:thm3}. We first clarify some notational differences. In Eq.~\eqref{eq:causal_def}, the causal dissipation is defined with respect to $\tilde{\mathbf{H}}_{L}=(\Pi_{Q_1},O_1,\cdots,\Pi_{Q_L},O_{L})$, while in Theorem~\ref{thm:thm3}, it is defined with respect to $\mathbf{H}_{L}=(A_1,W_1,\cdots,A_L,W_{L})$. The reason for this, as we will show in this proof, is that the actions taken $A_t$ are mathematically equivalent to performing the measurement $\Pi_{Q_t}$, and as a result the distribution of $W_t$ is identical to that of $O_t$ as shown in Eq.~\eqref{eq:work_dist}. See Corollary~\ref{corrollary:optimal_eig} for more information. Hence, the substitution is straightforward.
\begin{proof}
    Let $\tilde{\rho}_{Q_t | \mathbf{h}_{t-1}}$ be the reduced local state of the subsystem $Q_t$ conditioned on the observable history of applied protocols and extracted work $\mathbf{h}_{t-1}$. At each time step $t$, the optimal policy $\Lambda^*$ dictates a tailored target state $\rho_{\Lambda^*(t)}$. The maximum cumulative expected work extracted or TOFE can be expanded as
    \begin{equation}
        \F_{\text{TO}}^{(1:L)}(\rho^{(1:L)}) =\E_{\mathbf{H}_L|\Lambda^*} \left[ \sum_{t=1}^{L} \beta^{-1} \left( \D(\tilde{\rho}_{Q_t | \mathbf{H}_{t-1}} \| \gamma) - \D(\tilde{\rho}_{Q_t | \mathbf{H}_{t-1}} \| \rho_{\Lambda^*(t)}) \right) \right]~.
    \end{equation}
We expand the relative entropy terms using the identity $\D(\rho \| \sigma) = -S(\rho) - \tr(\rho \ln \sigma)$ and replace $\gamma=\mathbb {I}/d$ for degenerate Hamiltonian. This yields
\begin{equation}
    \D(\tilde{\rho}_{Q_t | \mathbf{H}_{t-1}} \| \gamma) - \D(\tilde{\rho}_{Q_t | \mathbf{H}_{t-1}} \| \rho_{\Lambda^*(t)}) = \ln d + \tr(\tilde{\rho}_{Q_t | \mathbf{H}_{t-1}} \ln \rho_{\Lambda^*(t)})~.
\end{equation}
Substituting this back into the TOFE, we get
\begin{equation}
    \beta \mathcal{F}_{\text{TO}}^{(1:L)}(\rho^{(1:L)}) = L \ln d + \mathbb{E}_{\mathbf{H}_L|\Lambda^*} \left[ \sum_{t=1}^{L} \tr(\tilde{\rho}_{Q_t | \mathbf{H}_{t-1}} \ln \rho_{\Lambda^*(t)}) \right]~.
\end{equation}
\begin{corollary}[Informational Equivalence of Optimal Protocols]
\label{corrollary:optimal_eig}
    As a direct consequence of Theorem~\ref{thm2}, the optimal tailored target state $\rho_{\Lambda^*(t)}$ establishes a mathematical correspondence between thermodynamic work extraction and the Shannon entropy of the classical measurement outcomes. Specifically, the trace evaluates as:
    \begin{equation}
        \tr(\tilde{\rho}_{Q_t | \mathbf{h}_{t-1}} \ln \rho_{\Lambda^*(t)}) = -H(W_t | \mathbf{h}_{t-1})~.
    \end{equation}
    $H(X)$ here refers to the Shannon entropy of the probability distribution of $X$.
\end{corollary}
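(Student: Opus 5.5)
The corollary follows by plugging the optimal target state of Theorem~\ref{thm2} into the trace and expanding in its eigenbasis. Fix a history $\mathbf{h}_{t-1}$ and write $\tilde\rho \coloneqq \tilde{\rho}_{Q_t | \mathbf{h}_{t-1}}$ for the agent's expected local state of $Q_t$. This is the state $\xi_{\gbm\eta}$ with $\gbm\eta$ the belief reached after $\mathbf{h}_{t-1}$; Lemma~\ref{lemma:sufficiency} guarantees that the belief predicts the same distribution of $X_t$ as the full history.

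First, suppose the action $\Lambda^*(t)$ is tailored to some eigenbasis $\{\ket{\lambda_i}\}_i$. By Theorem~\ref{thm2}, the optimal target state in that basis is
\begin{equation*}
\rho_{\Lambda^*(t)}=\sum_i p_i\ket{\lambda_i}\!\bra{\lambda_i},\qquad p_i=\bra{\lambda_i}\tilde\rho\ket{\lambda_i}.
\end{equation*}
The optimal policy can be assumed to choose this state. Any other spectrum on the same basis gives the same work statistics, hence the same Bayesian updates and future value, and strictly more dissipation $\D(\tilde\rho\|\rho)$. So the maximizer in the DP recursion always has this form.

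Second, expand the trace in the eigenbasis of $\rho_{\Lambda^*(t)}$, as in Eq.~\eqref{eq:optimality_proof}:
\begin{equation*}
\tr\bigl(\tilde\rho\ln\rho_{\Lambda^*(t)}\bigr)=\sum_i \bra{\lambda_i}\tilde\rho\ket{\lambda_i}\ln p_i=\sum_i p_i\ln p_i .
\end{equation*}

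Third, identify $p_i$ with the work distribution. By Eq.~\eqref{eq:work_dist}, applying the protocol tailored to $\rho_{\Lambda^*(t)}$ to the actual state gives $\Pr(W_t=w^{(i)}\mid\mathbf{h}_{t-1})=\bra{\lambda_i}\tilde\rho\ket{\lambda_i}=p_i$. We assume rank-1 measurements, so distinct $i$ give distinct work values; Appendix~\ref{sec:local_perturb} shows this can be arranged at vanishing cost. Hence $\sum_i p_i\ln p_i=-H(W_t\mid\mathbf{h}_{t-1})$, which is the claim.

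The main obstacle is that $\tilde\rho$ is conditioned on the full history, whereas $\rho_{\Lambda^*(t)}$ is built from a possibly discretized belief. I would argue they coincide: the belief is exact, and the expected state depends on the history only through $\Pr(X_t\mid\mathbf{h}_{t-1})$. A second point to handle is zero-probability levels ($p_i=0$), which I would treat with the convention $0\ln 0=0$.
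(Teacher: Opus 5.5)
Your proposal is correct and follows essentially the same route as the paper. You invoke Theorem~\ref{thm2} to fix the spectrum of $\rho_{\Lambda^*(t)}$ as the diagonal of $\tilde{\rho}_{Q_t|\mathbf{h}_{t-1}}$ in the chosen basis, expand the logarithm in that eigenbasis to obtain $\sum_i p_i\ln p_i$, and identify the $p_i$ with the work-outcome probabilities of Eq.~\eqref{eq:work_dist}. Your extra remarks make explicit several assumptions the paper's proof uses only tacitly: the $0\ln 0$ convention, the need for distinct work values so that $H(W_t|\mathbf{h}_{t-1})$ is the entropy of the outcome index, and exactness of the belief.
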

\begin{proof}
     By Theorem~\ref{thm2}, to minimize the local dissipation, the optimal tailored state is given by 
     \begin{equation}
         \rho_{\Lambda^*(t)} = \sum_{i}\lambda_i^*\ket{\psi_i}\!\bra{\psi_i}~,
     \end{equation}
     where the eigenvalues $\lambda_i^*$ must perfectly match the Born-rule probabilities of the actual work outcomes:
     \begin{equation}
         \lambda_i^* = \langle \psi_i | \tilde{\rho}_{Q_t | \mathbf{h}_{t-1}} | \psi_i \rangle = \Pr(W_t = w_i | \mathbf{h}_{t-1})~.
     \end{equation}
     Because the logarithm of this optimal state is $\ln \rho_{\Lambda^*(t)} = \sum_i \ln \lambda_i^*|\psi_i\rangle\langle\psi_i|$, evaluating the trace against the physical post-measured state yields:
\begin{equation}
    \tr(\tilde{\rho}_{Q_t | \mathbf{h}_{t-1}} \ln \rho_{\Lambda^*(t)}) = \sum_i \lambda_i^* \ln \lambda_i^* = -H(W_t | \mathbf{h}_{t-1}) ~.
\end{equation}
\end{proof}
By Corollary~\ref{corrollary:optimal_eig}, the trace term in our expanded expression maps exactly to the negative Shannon entropy of the extracted work outcomes conditioned on the history. Substituting this identity, the TOFE can be rewritten purely in terms of the information entropy generated along the sequence:
\begin{equation}
    \beta \mathcal{F}_{\text{TO}}^{(1:L)}(\rho^{(1:L)}) = L \ln d - \mathbb{E}_{\mathbf{H}_{L}|\Lambda^*} \left[ \sum_{t=1}^{L} H(W_t | \mathbf{H}_{t-1}) \right]~.
\end{equation}

At the final time step $L$, the agent extracts work without needing to acquire predictive information for future steps. Consequently, the optimal policy $\Lambda^*(L)$ defaults to a greedy strategy, perfectly aligning the tailored state with the final physical state $\tilde{\rho}_{Q_L| \mathbf{h}_{L-1}}$. Because of this, the Shannon entropy of the classical measurement outcome becomes mathematically identical to the von Neumann entropy of the conditional quantum state,
\begin{equation}
    H(W_L | \mathbf{h}_{L-1}) = S(\tilde{\rho}_{Q_L | \mathbf{h}_{L-1}})~.
\end{equation}
This then splits the summation to give
\begin{equation}
\label{eq:final_TO}
    \beta \mathcal{F}_{\text{TO}}^{(1:L)} = L \ln d - \mathbb{E}_{\mathbf{H}_{L}|\Lambda^*} \left[ \sum_{t=1}^{L-1} H(W_t | \mathbf{H}_{t-1}) + S(\tilde{\rho}_{Q_L | \mathbf{H}_{L-1}}) \right]~.
\end{equation}
Next, the non-equilibrium free energy of the global multi-time state is given by
\begin{equation}
    \beta \mathcal{F}_{\text{noneq}}^{(1:L)} = L \ln d - S(\rho^{(1:L)})~.
\end{equation}
The expectation term in our TOFE expansion represents the total entropy of the system after sequential interventions. By definition, causal dissipation $\delta(Q_{\overrightarrow{1:L}})$ is the minimal difference between this intervened entropy and the joint entropy $S(\rho^{(1:L)})$:
\begin{equation}
    \delta(Q_{\overrightarrow{1:L}})= \beta^{-1}\mathbb{E}_{\mathbf{H}_{L}|\Lambda^*} \left[ \sum_{t=1}^{L-1} H(W_t | \mathbf{H}_{t-1}) + S(\tilde{\rho}_{Q_L | \mathbf{H}_{L-1}}) \right] - \beta^{-1}S(\rho^{(1:L)})~.
\end{equation}
Rearranging to isolate the expectation and substituting this expression back into Eq.~\eqref{eq:final_TO} proves the theorem.
\begin{equation}
\begin{split}
    \beta \mathcal{F}_{\text{TO}}^{(1:L)} &= L \ln d - \left( \beta\delta(Q_{\overrightarrow{1:L}}) + S(\rho^{(1:L)}) \right)\\
    &=\left( L \ln d - S(\rho^{(1:L)}) \right) - \beta\delta(Q_{\overrightarrow{1:L}})\\
    \mathcal{F}_{\text{TO}}^{(1:L)} &=  \mathcal{F}_{\text{noneq}}^{(1:L)} - \delta(Q_{\overrightarrow{1:L}})~.
\end{split}
\end{equation}
\end{proof}

\subsection{Mathematical properties of causal dissipation}
\label{sec:dissipation_property}
Here, we try to establish some mathematical properties of causal dissipation.

\begin{enumerate}
 \item  Positivity: \label{property:positivity}
 $\delta(Q_{\overrightarrow{1:L}})\geq0$. The sequence of adaptive measurements/operations can be modeled as a single global unital CPTP dephasing channel $\Phi$ acting on the first $L-1$ subsystems. The von Neumann entropy of the resulting block-diagonal state evaluates exactly to the Shannon entropy of the classical trajectory plus the residual quantum entropy:
 \begin{equation}
     S(\Phi(\rho^{(1:L)})) = H(O_{1:L-1}) + \mathbb{E}_{\mathbf{H}_{1:L-1}}~S(\tilde{\rho}_{Q_L | \mathbf{H}_{1:L-1}})~.
 \end{equation}
 Using the classical chain rule $H(O_{1:L-1}) = \sum_{t=1}^{L-1} H(O_t | O_{1:t-1})$, these are precisely the classical entropies in the definition of causal dissipation. Because unital channels cannot decrease von Neumann entropy (by Klein's inequality~\cite{nielsen2010quantum}), $S(\Phi(\rho^{(1:L)})) \ge S(\rho^{(1:L)})$, which guarantees $\delta(Q_{\overrightarrow{1:L}}) \geq 0$.
    \item \label{property:i.i.d}$\delta(Q_{\overrightarrow{1:L}})=0$ when the subsystems are not correlated, i.e., 
        \begin{equation}
            \rho^{(1:L)} = \rho^{\otimes L}~,
        \end{equation}
        then, by definition of causal dissipation, the measurements on the preceding systems will not affect the state of the subsequent systems; hence the causal dissipation will be $0$. For the perturbed coin process, this can be seen when $p=1/2$, for which the state takes the form of $\rho^{(1:L)} = \left[\frac{1}{2}(\sigma^{(0)}+\sigma^{(1)})\right]^{\otimes L}$. 

    \item \label{property:CQ}$\delta(Q_{\overrightarrow{1:L}})=0$ if the multipartite state is of the form 
        \begin{equation}
            \rho^{(1:L)} = \sum_{i_1\cdots i_{L}} \Pr(i_1\cdots i_{L})\bigotimes_{t=1}^{L-1}\ket{i_t}\!\bra{i_t}\otimes \rho^{(i_L)}_{Q_L}~,
        \end{equation}
        where $\ket{i_t}$ form an orthonormal basis. The agent can always measure in the orthonormal eigenbasis of the first $L-1$ systems without disturbing the succeeding system. Note that this can be viewed as a stricter constraint compared to the bipartite analogue of a classical-quantum state. This property also demonstrates the \emph{asymmetry} of causal dissipation, i.e., $\delta(Q_{\overrightarrow{1:L}})\neq\delta(Q_{\overleftarrow{1:L}})$.
    \item \label{property:rate}$\delta(Q_{\overrightarrow{1:L}})\neq0 \nRightarrow \lim_{L\to\infty}\frac{1}{L}\delta(Q_{\overrightarrow{1:L}})\neq0$, i.e., the \emph{rate} of causal dissipation can asymptotically approach 0, even though the cumulative dissipation over a finite number of time steps is non-zero. To illustrate this, we consider the state produced by the perturbed coin when $p=0$; the quantum state will take the following form:
    \begin{equation}
        \rho^{(1:L)} = \frac{1}{2}\sigma^{(0)\otimes L} + \frac{1}{2}\sigma^{(1)\otimes L},
    \end{equation}
    where $\sigma^{(0)}$ and $\sigma^{(1)}$ are not orthogonal. Clearly, if we just consider the case where $L$ is not too large, the causal dissipation would not be 0, since $\sigma^{(0)}$ and $\sigma^{(1)}$ do not commute; any measurement will disturb the systems after the measurements. Conceptually, causal dissipation arises precisely because measurement on one subsystem produces indistinguishable conditional states on the other subsystems. Unlike the finite-$L$ case, in the asymptotic limit, it is always possible for the agent to distinguish the 2 states, $\sigma^{(0)\otimes L}$ or $\sigma^{(1)\otimes L}$. Further measurements no longer provide useful information, so the rate of causal dissipation approaches 0 as $L\to\infty$. 
\end{enumerate}

\begin{proof}
Formally, we wish to demonstrate that
\begin{equation}
    \lim_{L\to\infty} \frac{1}{L}\delta(Q_{\overrightarrow{1:L}}) = 0~,
\end{equation}
even though causal dissipation over any finite number of time steps is clearly non-zero. We can rearrange the terms to obtain that
\begin{equation}
    \lim_{L\rightarrow\infty} \frac{1}{L} \min_{\overrightarrow{\gbm\Pi}} \mathbb{E}_{\tilde{\mathbf{H}}_L|\overrightarrow{\gbm\Pi}}\left[\sum_{t=1}^{L-1} H(O_t|\tilde{\mathbf{H}}_{t-1}) + S(\tilde{\rho}_{Q_L|\tilde{\mathbf{H}}_{L-1}})\right] = \lim_{L\rightarrow\infty} \frac{1}{L} S(\rho^{(1:L)})~,
\end{equation}
where the RHS is the entropy rate of the quantum state.
Here we consider the cases in which the HMM has $p=0,\text{ or }1$ to illustrate. 
First, we realize that in both cases where $p=0$ or $p=1$, the total quantum state can be written as 
\begin{equation}
    \rho^{(1:L)} = \frac{1}{2}\rho_0 + \frac{1}{2}\rho_1~,
\end{equation}
where $\rho_0,\rho_1 = \sigma^{(0)\otimes L},\sigma^{(1)\otimes L}$ when $p=0$  and $\rho_0,\rho_1 = (\sigma^{(0)}\otimes\sigma^{(1)})^{\otimes L/2}, (\sigma^{(1)}\otimes\sigma^{(0)})^{\otimes L/2}$ when $p=1$.

We first prove that the actual entropy rate of the total quantum state tends to 0.
Here, we consider a quantity 
    \begin{equation}
        \chi_L = S(\rho^{(1:L)})-\frac{1}{2}S(\rho_0)-\frac{1}{2}S(\rho_1)~.
    \end{equation}
This quantity is known as the Holevo information, or Holevo quantity,  which is upper-bounded by the classical description of the ensemble, i.e., $0\leq \chi_L\leq H_2(\frac{1}{2})$ since both states appear with probability $1/2$ ; here, $H_2$ is the binary entropy. 
We then rewrite the entropy rate with this quantity.
\begin{equation}
    h \coloneqq \lim_{L\to\infty}\frac{1}{L}S(\rho^{(1:L)}) = \lim_{L\to\infty}\frac{\chi_L}{L} + \frac{1}{2L}S(\rho_0)+\frac{1}{2L}S(\rho_1)~.
\end{equation}
Since the Holevo information is bounded, this results in
\begin{equation}
    h = \lim_{L\to\infty}\frac{1}{2L}S(\rho_0)+\frac{1}{2L}S(\rho_1)~.
\end{equation}
For our case, this admits a closed-form solution and equals $0$ when both $\sigma^{(0)}$ and $\sigma^{(1)}$ are pure. 

Now we have to prove that 
\begin{equation}
     \lim_{L\rightarrow\infty} \min_{\overrightarrow{\gbm\Pi}} \frac{1}{L} \mathbb{E}_{\tilde{\mathbf{H}}_L|\overrightarrow{\gbm\Pi}}\left[\sum_{t=1}^{L-1} H(O_t|\tilde{\mathbf{H}}_{t-1}) + S(\tilde{\rho}_{Q_L|\tilde{\mathbf{H}}_{L-1}})\right]  =0
\end{equation}
Both the sum of Shannon entropies and the final von Neumann entropy are non-negative. It therefore suffices to find a measurement sequence, $\overrightarrow{\gbm\Pi}$, for which both terms converge to zero. 

From results in symmetric state discrimination, given two pure quantum states $\rho$ and $\sigma$ with equal prior probabilities, one may distinguish them asymptotically. In particular, there exists a sequence of measurements $\{\Pi_1^{(i_1)},\Pi_2^{(i_2)},\ldots,\Pi_L^{(i_L)}\}$ such that
\begin{equation}
\lim_{L\to\infty}\tr(\Pi_1^{(i_1)}\otimes\Pi_2^{(i_2)}\otimes\ldots\otimes\Pi_L^{(i_L)} \rho^{\otimes L}) = 1 
\end{equation}
and 
\begin{equation}
\lim_{L\to\infty}\tr(\Pi_1^{(i_1)}\otimes\Pi_2^{(i_2)}\otimes\ldots\otimes\Pi_L^{(i_L)} \sigma^{\otimes L}) = 0 ~.
\end{equation}
The agent's strategy $\vec{\Pi}^*$ is as follows: measure in the eigenbasis of $\rho$, such that $\tr(\Pi\rho) = 1$ and $\tr(\Pi\sigma)=r$. As long as the measurement result corresponds to $\Pi$, the agent continues measuring in that basis; on the other hand, if the agent observes an outcome corresponding to $\mathds{1}-\Pi$, it chooses to measure in the eigenbasis of $\sigma$, i.e., $\Pi'$ such that $\tr (\Pi'\sigma)=1$, $\tr(\Pi'\rho)=r$. 

We evaluate the expected classical memory entropy, $\mathbb{E}[H(O_t|\tilde{\mathbf{H}}_{t-1})]$. Suppose the agent observes the outcome corresponding to $\Pi$ for the first $k$ consecutive times. The post-measurement global state is updated to:

\begin{equation}
\label{eq:post_state}
\begin{split}
    \tilde\rho^{(1:L)} = \frac{(\Pi^{\otimes k}\otimes \mathds{1}^{\otimes(L-k)})\rho^{(1:L)})(\Pi^{\otimes k}\otimes \mathds{1}^{\otimes(L-k)})}{\tr((\Pi^{\otimes k}\otimes \mathds{1}^{\otimes(L-k)})\rho^{(1:L)})} &= \frac{\frac{1}{2}\rho^{\otimes L}+\frac{1}{2}r^k\rho^{\otimes L}\otimes \sigma^{\otimes (L-k)}}{\frac{1}{2}+\frac{1}{2}r^k}\\
    &= \frac{1}{1+r^k} \rho^{\otimes L} + \frac{r^k}{1+r^k}\rho^{\otimes k}\otimes \sigma^{\otimes (L-k)}
\end{split}
\end{equation}
For the $(k+1)$-th measurement, the probability of obtaining the $\Pi$ outcome again is $\frac{1+r^{k+1}}{1+r^k}$. The conditional entropy of this specific outcome is given by the binary entropy: $H_2\left(\frac{1+r^{k+1}}{1+r^k}\right)$.

If, conversely, the agent had previously observed an anomaly (a $\mathds{1}-\Pi$ outcome), the state collapses entirely to the $\sigma$ branch. The agent switches to $\Pi'$, making all subsequent outcomes deterministic, yielding a binary entropy of $H_2(1) = 0$.
Taking the expectation over the history space $\tilde{\mathbf{H}}_{k}$, the expected classical entropy at step $k+1$ is:
\begin{equation}
    \mathbb{E}_{\tilde{\mathbf{H}}_k}[H(O_{k+1}|\tilde{\mathbf{H}}_k)] = \frac{1+r^k}{2} H_2\left(\frac{1+r^{k+1}}{1+r^k}\right) + \frac{1-r^k}{2} H_2(1)
\end{equation}

Because $0 \le r < 1$, the ratio $\frac{1+r^{k+1}}{1+r^k} \rightarrow 1$ as $k\to\infty$. By continuity of the binary entropy, the corresponding conditional entropy therefore approaches $H_2(1) = 0$. This means the expected classical entropy of the measurement sequence converges to $0$,
\begin{equation}
    \lim_{k\rightarrow\infty} \mathbb{E}_{\tilde{\mathbf{H}}_k}[H(O_{k+1}|{\tilde{\mathbf{H}}_k})] = 0~.
\end{equation}
We use the property of Ces$\grave{a}$ro limits~\cite{rudin2021principles}, which states that if a sequence converges to a limit $L$, then the sequence of its averages also converges to $L$.
\begin{equation}
    \lim_{L\rightarrow\infty} \frac{1}{L} \mathbb{E}_{\tilde{\mathbf{H}}_L}\sum_{t=1}^{L-1} [H(O_t|\tilde{\mathbf{H}}_{t-1})] = 0~.
\end{equation}

Finally, we evaluate the expected von Neumann entropy of the final conditional subsystem. Successive measurements can distinguish wh    ether the underlying state is $\rho^{\otimes L}$ or $\sigma^{\otimes L}$ with an error probability that vanishes as $L\to\infty$. The final conditional state $\tilde{\rho}_{Q_L|\tilde{\mathbf{H}}_{L-1}}$ will thereby be purified to either $\rho$ or $\sigma$. Because both are pure states, their von Neumann entropies are zero.
\begin{equation}
    \lim_{L\rightarrow\infty} \frac{1}{L} \mathbb{E}_{\tilde{\mathbf{H}}_L}[S(\tilde{\rho}_{Q_L|\tilde{\mathbf{H}}_{L-1}})] = 0~.
\end{equation}

Since both the measurement cost and the final quantum uncertainty asymptotically vanish under this specific measurement policy $\vec{\Pi}^*$, the minimal expectation over all policies must also equal zero. Therefore:
\begin{equation}
    \lim_{L\rightarrow\infty} \frac{1}{L} \delta(Q_{\overrightarrow{1:L}}) =0~.
\end{equation}
This completes the proof.
\end{proof}

\subsection{Numerical Simulation}
\label{sec:numerical}

In Fig.~\ref{fig:full_compare}, we demonstrate the agreement between the simulated work deficit and the causal dissipation for both $L=3$ and $L=4$. Note that the lack of comparison for a longer time horizon is strictly a restriction due to the brute force optimization in calculating the causal dissipation, which scales exponentially with $L$ rather than being a limitation of our dynamic programming, which scales linearly with $L$. 
We use the following parameters and Alg.~\ref{alg:DPP_for_work} to find the optimal policy for $p\in (0,1)$ and $r\in(0,1)$.
\begin{enumerate}
    \item The emitted quantum state can take on either $\ket{\phi_0}=\ket{0}$ or $\ket{\phi_1}=\sqrt{r}\ket{0}+\sqrt{1-r}\ket{1}$.
    \item Belief states: $\mathcal{K}= \left\{ \left( 1/2 + \epsilon, 1/2 - \epsilon \right) \ \middle|\ \epsilon \in \left[-1/2, 1/2\right] \right\}$;  we divide the parameter $\epsilon$ into 300 equal-sized intervals.
    \item Actions: $\A$ consists of all bases spanned by $\ket{\phi_0}$ and $\ket{\phi_1}$, taking the form of $\A=\{\ket{\psi_{\theta}}=\cos\frac{\theta}{2}\ket{0}+\sin\frac{\theta}{2}\ket{1}|\theta\in[0,2\pi)\}$; we also divide the parameter $\theta$ into 300 equal sized intervals. 
\end{enumerate}

After the optimal policy is found, we apply it by initializing the agent at belief state $K_0=\mu_0 = \gbm\pi=(1/2,1/2)$ of the perturbed coin process in Fig.~\ref{fig:combined_fig}(a). The agent then finds the optimal action based on the policy provided at each time step, conditioned on its belief state. The resulting work extracted is deducted from the non-equilibrium free energy of the multi-time state shown in Eq.~\eqref{eq:fuel_state}. This is then averaged across $100$ repetitions for both $L=3$ and $L=4$. The optimization of causal dissipation is done using \texttt{scipy.optimize.differential\_evolution}.

\begin{figure}
    \centering
    \includegraphics[width=0.7\linewidth]{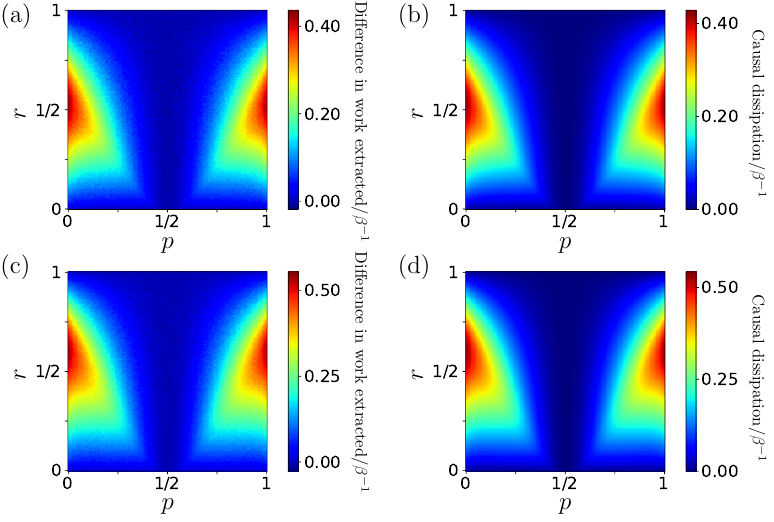}
    \caption{The gap between the non-equilibrium free energy and the TOFE (panels (a) and (c)) shows good agreement with the analytical causal dissipation (panels (b) and (d)). The top row corresponds to a time horizon of $L=3$, while the bottom row is for $L=4$.}
    \label{fig:full_compare}
\end{figure}

\section{Bounding the Cost of Agent Memory Updates}
\label{sec:mem_cost}
In this section, we investigate the thermodynamic cost associated with updating the agent's memory during work extraction. We demonstrate that memory updates can be performed with negligible additional overhead. 

We begin by observing that at any time step $t$, the agent has an internal memory that keeps track of the belief state $\gbm\eta_{t-1}$, and the action taken by the agent is determined by a policy $A_t =\Lambda(\gbm\eta_{t-1})$. The agent then executes the action on the quantum state $\sigma_{Q_t}$ and obtains work $W_t$ after doing a projective measurement on the battery. The agent then updates its internal memory to $\gbm\eta_t$ using an update rule $\tau$. For this paper, we use Bayesian inference for $\tau$.

\begin{equation}
\begin{split}
\gbm\eta_{t} &= \tau(\gbm\eta_{t-1}, w_t) \\
&= N_{t-1}^{-1} \sum_{x_{t}\in\mathcal{X}} \Pr(W_{t}=w_{t}|x_{t},a_{t},\gbm\eta_{t-1}) \gbm\eta_{t-1} \mathsf{T}^{(x_t)}~.
\end{split}
\end{equation} 

Throughout our discussion, the Hamiltonian is degenerate; hence the work distribution is given by 
\begin{equation}
\label{eq:pdf_work}
    \Pr(W=w_i) = \bra{\lambda_i}\sigma\ket{\lambda_i},\hquad\beta w_i \coloneqq \ln{\lambda_i}+\ln d~,
\end{equation} 
where $\lambda_i$ is the $i$-th eigenvalue of the state to which the action is tailored and $\ket{\lambda_i}$ is its corresponding eigenvector. 

We first establish the necessary condition for a dissipation-free update. For each time $t$, we define a supported update map
\begin{equation}
    F_t:(\gbm\eta_{t-1},w_t) \mapsto (\tau(\gbm\eta_{t-1},w_t),w_t)~,
\end{equation}
whose domain contains only pairs $(\gbm\eta,w)$ that can occur with non-zero probability under the optimal policy.
\begin{prop}[Reversibility criterion]
    The update can be implemented by a unitary on the
memory and the current battery, without an additional ancilla, if $F_t$ is injective on its supported domain. 
\end{prop}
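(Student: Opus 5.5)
The plan is to realize the update as a permutation of a finite orthonormal basis of the joint memory--battery register. I would take the memory register $M$ to have an orthonormal basis $\{\ket{\gbm\eta}\}$ indexed by the finite discretized belief set $\mathcal{K}_{\text{DP}}$. After the battery's energy-eigenbasis measurement, the relevant part of the battery $B_t$ is spanned by the orthonormal states $\{\ket{w}\}$ labelling the finitely many work outcomes; these states are distinct because, by the perturbation lemma of Appendix~\ref{sec:local_perturb}, distinct outcomes carry distinct work values. The joint computational basis of $M\otimes B_t$ is then $\{\ket{\gbm\eta}\ket{w}\}$. On this basis, $F_t$ is a classical map from the supported domain $D_t\subseteq \mathcal{K}_{\text{DP}}\times\mathcal{W}$ into $\mathcal{K}_{\text{DP}}\times\mathcal{W}$. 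Since the $w$ component is copied unchanged, the image lies in the same finite set.

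First, assume $F_t$ is injective on $D_t$. The linear map $V:\ket{\gbm\eta}\ket{w}\mapsto\ket{\tau(\gbm\eta,w)}\ket{w}$ then sends orthonormal vectors to orthonormal vectors, so it is an isometry from $\mathrm{span}\,D_t$ onto $\mathrm{span}\,F_t(D_t)$. Next, the complements of $D_t$ and of $F_t(D_t)$ in the finite set $\mathcal{K}_{\text{DP}}\times\mathcal{W}$ have equal cardinality, because injectivity gives $|F_t(D_t)|=|D_t|$. I would therefore pick any bijection between the two complements. Gluing it to $F_t$ gives a permutation $\pi$ of the full basis, and $U=\sum_{(\gbm\eta,w)}\ket{\pi(\gbm\eta,w)}\!\bra{\gbm\eta,w}$ is a permutation unitary acting only on $M\otimes B_t$. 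It agrees with the Bayesian update on every pair that occurs with nonzero probability, so no ancilla is needed.

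Finally, I would remark on dissipation. A unitary on the memory and battery preserves their joint entropy and does not touch the reservoir. The update therefore incurs no Landauer cost, and the energy bookkeeping on the battery is unaffected because $U$ is diagonal in $w$.

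The main obstacle is the infinite-dimensional, continuous battery (the semiclassical weight). There the complement argument cannot be run over a finite set, and the permutation must not shift the battery's energy label. I would handle this by observing that the battery label $w$ is left fixed. The permutation can therefore be built block-wise: for each fixed $w$, the map $\gbm\eta\mapsto\tau(\gbm\eta,w)$ on supported beliefs is injective, because injectivity of $F_t$ implies injectivity within each $w$-slice. Each such slice extends to a permutation of the finite set $\mathcal{K}_{\text{DP}}$, acting as the identity on unsupported battery levels. This gives a controlled unitary $\sum_w U_w\otimes\ket{w}\!\bra{w}$ on $M\otimes B_t$.
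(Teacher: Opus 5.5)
Your proposal is correct and follows essentially the same route as the paper's proof (Lemma~\ref{lemma:injective_unitary}): orthonormal supported inputs are sent isometrically to orthonormal outputs, and the explicit realization is the controlled permutation $U_t=\sum_w \ket{w}\!\bra{w}_{B_t}\otimes U_{t,w}$, obtained by extending each injective slice $\gbm\eta\mapsto\tau(\gbm\eta,w)$ to a permutation of the memory basis. The paper also notes that the degeneracy of $\mathcal{H}_M$ makes $U_t$ commute with $\mathcal{H}_M+\mathcal{H}_{B_t}$, which your closing energy remark implicitly relies on; your appeal to the perturbation lemma is unnecessary, since distinct labels $w$ are already orthogonal battery eigenstates.
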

We now prove the following lemma regarding this implementation.
\begin{lemma}[Injective updates admit controlled-unitary implementations]
\label{lemma:injective_unitary}
    Suppose that for a fixed time $t$, the map $F_t$ 
    \begin{equation}
        F_t:(\gbm\eta_{t-1},w_t) \mapsto (\tau(\gbm\eta_{t-1},w_t),w_t)~,
    \end{equation}
    is injective on all supported pairs. Then there exists a unitary $U_t$ on $M\otimes B_t$ satisfying
    \begin{equation}
    \label{eq:batt_control_unitary}
        U_t\ket{\gbm\eta_{t-1}}_M\ket{w_t}_{B_t} = \ket{\tau(\gbm\eta_{t-1},w_t)_M}\ket{w_t}_{B_t}
    \end{equation}
    for every such supported pair. The unitary can be chosen to commute with $\mathcal{H}_M+\mathcal{H}_{B_t}$.
\end{lemma}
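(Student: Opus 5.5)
We construct $U_t$ block by block: one block for each value of $w_t$, i.e.\ a battery-controlled unitary $U_t=\sum_{w}V_w\otimes\ket{w}\!\bra{w}_{B_t}$.

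\textbf{Step 1: reduce to injective maps on the memory.} Since the battery register is left invariant, we only need, for each fixed supported work value $w$, a unitary $V_w$ on $M$ with $V_w\ket{\gbm\eta}=\ket{\tau(\gbm\eta,w)}$ for every belief $\gbm\eta$ such that $(\gbm\eta,w)$ is supported. Write $D_w$ for the set of such $\gbm\eta$. Injectivity of $F_t$ on supported pairs means the following: if $(\gbm\eta,w)$ and $(\gbm\eta',w)$ are supported and $\tau(\gbm\eta,w)=\tau(\gbm\eta',w)$, then $\gbm\eta=\gbm\eta'$. Hence the restriction $\tau(\cdot,w)\colon D_w\to\mathcal{K}$ is injective.

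\textbf{Step 2: extend each partial injection to a permutation.} Take the memory register $M$ to have an orthonormal basis $\{\ket{\gbm\eta}\}_{\gbm\eta\in\mathcal{K}}$, where $\mathcal{K}$ is the finite discretised belief set used by Alg.~\ref{alg:DPP_for_work}. The map $\tau(\cdot,w)$ is an injection from $D_w$ into $\mathcal{K}$, and $\mathcal{K}$ is finite. The complements $\mathcal{K}\setminus D_w$ and $\mathcal{K}\setminus\tau(D_w,w)$ therefore have equal cardinality, so we can pick any bijection between them. Together these give a permutation $\pi_w$ of $\mathcal{K}$, and $V_w\coloneqq\sum_{\gbm\eta}\ket{\pi_w(\gbm\eta)}\!\bra{\gbm\eta}$ is a permutation matrix, hence unitary. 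By construction $U_t=\sum_w V_w\otimes\ket{w}\!\bra{w}$ is unitary on $M\otimes B_t$ and satisfies Eq.~\eqref{eq:batt_control_unitary}. Off the span of the $\ket{w}$ basis states actually used, we set $U_t$ to act as the identity on $B_t$.

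\textbf{Step 3: commutation with $\mathcal{H}_M+\mathcal{H}_{B_t}$.} The battery is only measured in its energy eigenbasis $\{\ket{w}\}$, and $U_t$ is diagonal in that basis, so $U_t$ commutes with $\mathcal{H}_{B_t}$. For $\mathcal{H}_M$, the memory register is a bookkeeping register with no thermodynamic role, so we may take it to be energy-degenerate, $\mathcal{H}_M\propto\mathds{1}$. Then every $V_w$ commutes with it, and $[U_t,\mathcal{H}_M+\mathcal{H}_{B_t}]=0$.

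\textbf{Main obstacle.} The step I expect to be hardest is making the commutation claim honest if $\mathcal{H}_M$ is nondegenerate. In that case $\pi_w$ must map each energy level of $M$ to a level of the same energy. This requires the injection on $D_w$ to be energy-preserving, and the completion of $\pi_w$ to be done within each energy eigenspace separately, which needs matching cardinalities per level. I would avoid this by stipulating the degenerate memory encoding described above. That choice is consistent with the paper's assumption of energy-degenerate systems.
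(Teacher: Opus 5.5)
Your proposal is correct and follows the paper's proof essentially step for step: you reduce injectivity of $F_t$ to injectivity of $\tau(\cdot,w)$ for each fixed $w$, extend each of these to a permutation of the finite memory basis, assemble the battery-controlled unitary $\sum_w \ket{w}\!\bra{w}_{B_t}\otimes U_{t,w}$, and get commutation from the degenerate-memory assumption. Your explicit choice of the identity on unused battery levels is slightly more careful than the paper's ``arbitrary unitary extension on the battery,'' because it directly guarantees that $U_t$ commutes with $\mathcal{H}_{B_t}$.
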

\begin{proof}
Different supported input pairs correspond to orthogonal states. Injectivity ensures that their proposed output states are also different and therefore orthogonal. The map in Eq.~\eqref{eq:batt_control_unitary} is consequently an isometry on the supported subspace. Any finite-dimensional isometry between subspaces of equal dimension can be completed to a unitary on the full space.

More explicitly, for each battery value $w$, injectivity says that $\gbm\eta \mapsto \tau(\gbm\eta,w)$ is one-to-one on the beliefs compatible with $w$. It can therefore be extended to a permutation, and hence to a unitary $U_{t,w}$ of
the whole memory basis. Define
\begin{equation}
    U_t = \sum_{w}\ket{w}\!\bra{w}_{B_t}\otimes U_{t,w}~,
\end{equation}
with an arbitrary unitary extension on the battery. This operation does not alter $w$, and $U_{t,w}$ acts within the degenerate memory. Hence it commutes with $\mathcal{H}_M+\mathcal{H}_{B_t}$.
\end{proof}

Thus, if injectivity holds, we are guaranteed to have zero dissipation. However, if it does not hold directly, we demonstrate below that it can be restored via spectral tagging with an arbitrarily small increase in energy dissipation.

\subsection{Spectral tagging}
Let $\xi_{\gbm\eta}$ be the expected state conditioned on the belief $\gbm\eta$. For a chosen basis, define 
\begin{equation}
    p_{t,\gbm\eta,i} \coloneqq \bra{\psi_{t,\gbm\eta,i}}\xi_{\gbm\eta}\ket{\psi_{t,\gbm\eta,i}}~.
\end{equation}
The optimal target spectrum for this fixed basis is $p_{t,\gbm\eta}$. The unperturbed optimal target state thus takes the form: 
\begin{equation}
    \rho_{t,\gbm\eta}=\sum_{i=1}^d p_{t,\gbm\eta,i}\ket{\psi_{t,\gbm\eta,i}}\!\bra{\psi_{t,\gbm\eta,i}}~.
\end{equation}

We introduce a tagging strength 0 < $\alpha$ < 1. For every $(t, \gbm\eta)$, choose an auxiliary probability vector
\begin{equation}
    r_{t,\gbm\eta} = (r_{t,\gbm\eta,1}, \cdots , r_{t,\gbm\eta,d})
\end{equation}
located strictly in the interior of the probability simplex, ensuring every component is strictly positive. We define the perturbed spectrum as
\begin{equation}
\label{eq:perturb_spec}
    q_{t,\gbm\eta,i} = (1 - \alpha)p_{t,\gbm\eta,i} + \alpha r_{t,\gbm\eta,i}~.
\end{equation}
\begin{lemma}[Existence of non-colliding tags]
\label{lemma:non-colliding}
    For every fixed $t$, the auxiliary vectors $r_{t,\gbm\eta}$ can be chosen so that all numbers
    \begin{eqnarray}
        \{q_{t,\gbm\eta,i}:\gbm\eta\in\mathcal{K}_{t-1}, i=1,\cdots,d\}
    \end{eqnarray}
    are pairwise distinct. Furthermore, every $q_{t,\gbm\eta,i}$ is then strictly positive.
\end{lemma}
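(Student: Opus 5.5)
Positivity is immediate. For every $\gbm\eta$ and $i$ we have $p_{t,\gbm\eta,i}\ge 0$, $1-\alpha>0$, $\alpha>0$ and $r_{t,\gbm\eta,i}>0$, so $q_{t,\gbm\eta,i}\ge \alpha r_{t,\gbm\eta,i}>0$. The substance of the lemma is the distinctness, and I will prove it by a genericity (measure-zero) argument on the finite collection of auxiliary vectors, in the same spirit as the perturbation lemma of Appendix~\ref{sec:local_perturb}.

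Fix $t$ and let $\mathcal{K}_{t-1}$ be the finite set of supported beliefs, $|\mathcal{K}_{t-1}|=N$. The unknowns are the $N$ vectors $r_{t,\gbm\eta}$. They range over the product $\Delta^\circ\times\cdots\times\Delta^\circ$ of open simplices, which is a nonempty open subset of an affine space of dimension $N(d-1)$. For two distinct index pairs $(\gbm\eta,i)\neq(\gbm\eta',j)$, a collision is the affine equation
\begin{equation}
(1-\alpha)\left(p_{t,\gbm\eta,i}-p_{t,\gbm\eta',j}\right)+\alpha\left(r_{t,\gbm\eta,i}-r_{t,\gbm\eta',j}\right)=0 .
\end{equation}
I will show that each such equation cuts out a proper affine hyperplane of the parameter space. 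There are finitely many pairs, so the union of these hyperplanes is closed and nowhere dense, and its complement in the open product of simplices is nonempty (indeed open and dense). Any choice in that complement gives pairwise distinct $q$'s.

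The main obstacle is checking that the linear part of each constraint, $r_{t,\gbm\eta,i}-r_{t,\gbm\eta',j}$, is not identically zero once we restrict to the affine constraints $\sum_k r_{t,\gbm\eta,k}=1$. I will split into two cases.

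\textbf{Case $\gbm\eta\neq\gbm\eta'$.} The coordinates $r_{t,\gbm\eta,i}$ and $r_{t,\gbm\eta',j}$ belong to independent simplices. I can move $r_{t,\gbm\eta,i}$ while holding $r_{t,\gbm\eta'}$ fixed, compensating by shifting another component of $r_{t,\gbm\eta}$ (this needs $d\ge2$). So the functional is nonconstant.

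\textbf{Case $\gbm\eta=\gbm\eta'$, $i\neq j$.} Moving mass from component $j$ to component $i$ inside the same simplex changes $r_{t,\gbm\eta,i}-r_{t,\gbm\eta,j}$, so again the functional is nonconstant.

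In both cases the constraint is a nontrivial affine condition, and its zero set meets the open product of simplices in a set of codimension one. This completes the argument. The case $d=1$ is trivial, since there is nothing to tag. As a final remark, I will note that $\alpha$ can be taken arbitrarily small, so the tags are compatible with the vanishing-cost claims made later in the paper.
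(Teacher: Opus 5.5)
Your proposal is correct and takes the same route as the paper: positivity follows from $q_{t,\gbm\eta,i}\ge\alpha r_{t,\gbm\eta,i}>0$, and distinctness comes from choosing the auxiliary vectors off a finite union of proper affine hyperplanes in the product of open simplices, with your two-case check supplying the properness that the paper only asserts. One small correction: for $d=1$ the statement is not trivial but false whenever $|\mathcal{K}_{t-1}|\ge 2$, because the only admissible auxiliary vector is $r=(1)$ and so every $q_{t,\gbm\eta,1}=1$; you therefore need to assume $d\ge 2$, as the paper does implicitly.
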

\begin{proof}
    There are only finitely many beliefs and indices. For each distinct pair $(\gbm\eta, i) \neq (\gbm\eta', j)$,
the collision equality
\begin{equation}
    (1-\alpha)p_{t,\gbm\eta,i}+\alpha r_{t,\gbm\eta,i}=(1-\alpha)p_{t,\gbm\eta',j}+\alpha r_{t,\gbm\eta',j}
\end{equation}
defines a proper affine hyperplane within the product of the auxiliary probability simplices. Since a finite union of proper hyperplanes cannot cover a non-empty open set, we may choose the collection $\{r_{t,\gbm\eta}\}_{\gbm\eta}$ in the complement of all these hyperplanes. Because each $r_{t,\gbm\eta}$ is chosen to be in the interior and $\alpha>0$, every component of $q_{t,\gbm\eta}$ is therefore positive even when the corresponding $p_{t,\gbm\eta}=0$. This property also ensures the local dissipation term remains finite when the original tailored state is not full-rank.
\end{proof}
We construct the tagged target state 
\begin{equation}
    \tilde \rho_{t,\gbm\eta} = \sum_{i=1}^d q_{t,\gbm\eta,i} \ket{\psi_{t,\gbm\eta,i}}\!\bra{\psi_{t,\gbm\eta,i}}~,
\end{equation}
which yields the corresponding tagged work values: 
\begin{equation}
\label{eq:new_work}
    \beta \tilde w_{t,\gbm\eta,i} = \ln q_{t,\gbm\eta,i} + \ln d~.
\end{equation}
Since the logarithm is a one-to-one function, Lemma~\ref{lemma:non-colliding} guarantees that these perturbed work values uniquely identify the specific pair $(\gbm\eta,i)$ at any time step. Beyond injectivity, we must verify that this spectral tagging does not alter the underlying Bayesian inference dynamics.

\begin{lemma}[Spectral tagging preserves the complete observation process]
\label{lemma:spectral_preserve}
    Assume that the observed unperturbed work values of $\rho_{t,\gbm\eta}$ are pairwise distinct (if this assumption does not hold, please refer to Appendix~\ref{sec:local_perturb}). Under this assumption, replacing $\rho_{t,\gbm\eta}$ with $\tilde \rho_{t,\gbm\eta}$ leaves the following unchanged:
    \begin{enumerate}
        \item the probability of every branch conditioned on every emitted state $\sigma^{(x)}$;
        \item the Bayesian posterior conditioned on that branch;
        \item the probability distribution over future beliefs;
        \item the future bases selected by the original policy.
    \end{enumerate}
\end{lemma}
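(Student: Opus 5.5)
The argument rests on one observation: the tagged and untagged protocols perform the same projective measurement. Both $\rho_{t,\gbm\eta}$ and $\tilde\rho_{t,\gbm\eta}$ are diagonal in the basis $\{\ket{\psi_{t,\gbm\eta,i}}\}_i$. By Eq.~\eqref{eq:work_dist}, the protocol tailored to either state, applied to an emitted $\sigma^{(x)}$, produces outcome branch $i$ with probability $\bra{\psi_{t,\gbm\eta,i}}\sigma^{(x)}\ket{\psi_{t,\gbm\eta,i}}$. This does not depend on the spectrum. The spectrum enters only through the numerical label $\beta w=\ln(\cdot)+\ln d$ attached to branch $i$. So the plan is to show that, under the stated assumption, the relabelling $w_{t,\gbm\eta,i}\mapsto\tilde w_{t,\gbm\eta,i}$ is a bijection between outcome branches. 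Items 1--4 then follow one after another, by induction on $t$.

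\textbf{Item 1.} Fix $t$ and $\gbm\eta$. By assumption the unperturbed values $\{w_{t,\gbm\eta,i}\}_i$ are pairwise distinct, so each observed $w$ identifies exactly one branch $i$. Lemma~\ref{lemma:non-colliding} makes the $\tilde w_{t,\gbm\eta,i}$ pairwise distinct as well (even across beliefs), so each $\tilde w$ also identifies exactly one branch. The map $w_{t,\gbm\eta,i}\leftrightarrow\tilde w_{t,\gbm\eta,i}$ is therefore a well-defined bijection. Consequently $\Pr(W_t=w_{t,\gbm\eta,i}\mid x,a)=\Pr(\tilde W_t=\tilde w_{t,\gbm\eta,i}\mid x,\tilde a)$ for every $x\in\X$.

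\textbf{Item 2.} The Bayesian update $\tau$ in Eq.~\eqref{eq:belief_traj} depends on the observation only through the likelihoods $\Pr(w\mid x,a)$, $x\in\X$. By item 1 these likelihoods agree for corresponding outcomes, so $\tau(\gbm\eta,a,w_{t,\gbm\eta,i})=\tau(\gbm\eta,\tilde a,\tilde w_{t,\gbm\eta,i})$. This matches the observation in the proof of Theorem~\ref{thm2} that the update depends only on $\{\bra{\psi_i}\sigma^{(x)}\ket{\psi_i}\}$.

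\textbf{Item 3.} Fix $\gbm\eta$. The next-belief distribution is the pushforward of the branch distribution $\Pr(i\mid\gbm\eta)=\sum_x\Pr(x\mid\gbm\eta)\bra{\psi_i}\sigma^{(x)}\ket{\psi_i}$ under $i\mapsto\tau$. Both this distribution (item 1) and the map (item 2) are unchanged, so the one-step kernel $\Pr(K_t\mid K_{t-1}=\gbm\eta)$ is identical in the tagged and untagged processes.

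\textbf{Item 4.} The original policy assigns to each belief a basis $\{\ket{\psi_{t,\gbm\eta,i}}\}$, and tagging leaves that basis untouched. The tagged agent therefore chooses the same basis at every belief. Inducting on $t$ from $K_0=\mu_0$, the entire belief trajectory law coincides, which gives item 3 at all times and item 4 along every realised history.

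The main obstacle is the bookkeeping in item 1. One must check that the distinctness assumption really makes the unperturbed observation process and the branch-index process identical, so that nothing is lost by working with indices. One must also handle branches with $p_{t,\gbm\eta,i}=0$. These have zero probability under $\xi_{\gbm\eta}$ but could occur if the belief is wrong. I would treat them by observing that the unperturbed protocol assigns them divergent work, whereas tagging makes them finite and distinct. The claims should then be restricted to the supported domain of $F_t$, where both processes assign identical probabilities.
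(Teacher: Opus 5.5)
Your proposal is correct and follows essentially the same route as the paper. The shared eigenbasis makes the branch likelihoods $\bra{\psi_{t,\gbm\eta,i}}\sigma^{(x)}\ket{\psi_{t,\gbm\eta,i}}$ independent of the spectrum, the tagged work values decode one-to-one back to branches so the posterior equals the untagged $\tau(\gbm\eta,i)$, and induction over $t$ from $K_0=\mu_0$ then yields the same belief-trajectory law and the same chosen bases. Your version is more explicit than the paper's on one point: the distinctness hypothesis is exactly what ensures the untagged posterior is already resolved branch by branch rather than coarse-grained over colliding outcomes.
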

\begin{proof}The perturbed and unperturbed target states share exactly the same eigenbasis. Branch probabilities depend exclusively on this basis, independently of the target's eigenvalues, as shown in Eq.~\eqref{eq:pdf_work}. Therefore, the likelihoods utilized in Bayes' rule remain identical, proving points 1 and 2.

    While the numerical work value changes, Eq.~\eqref{eq:new_work} provides the one-to-one decoding from $\tilde w_{t,\gbm\eta,i} \leftrightarrow(\gbm\eta,i)$. We can therefore define the tagged update
    \begin{equation}
        \tilde\tau(\gbm\eta,\tilde w_{t,\gbm\eta,i})\coloneqq \tau(\gbm\eta,i)~.
    \end{equation}
    The posterior remains the same as under the unperturbed branch $i$. By induction over $t$, the full distribution over belief trajectories, and hence the future bases chosen along every trajectory, remains unchanged; points 3 and 4 are hence proven.
\end{proof}

\begin{lemma}[Tagging forces injectivity]
\label{lemma:tagging_injectivity}
For a tagged policy, the supported map
\begin{equation}
    (\gbm\eta,\tilde w_{t,\gbm\eta,i}) \mapsto (\tau(\gbm\eta,i), \tilde w_{t,\gbm\eta,i})
\end{equation}
is injective.
\end{lemma}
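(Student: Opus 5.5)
The plan is to show that the tagged work value alone determines the input pair, so that injectivity follows from Lemma~\ref{lemma:non-colliding} and the fact that the logarithm is one-to-one. Fix a time step $t$. Suppose two supported input pairs $(\gbm\eta,\tilde w_{t,\gbm\eta,i})$ and $(\gbm\eta',\tilde w_{t,\gbm\eta',j})$ are mapped to the same output. Then their second components must agree, so $\tilde w_{t,\gbm\eta,i}=\tilde w_{t,\gbm\eta',j}$. We never need to examine the first component $\tau(\gbm\eta,i)$ at all. The map retains the battery register $w$ unchanged, and this register already encodes the full input pair.

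The key steps, in order, are as follows.

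First, by Eq.~\eqref{eq:new_work}, equality of tagged work values gives $\ln q_{t,\gbm\eta,i}+\ln d=\ln q_{t,\gbm\eta',j}+\ln d$.

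Second, since every $q$ is strictly positive by Lemma~\ref{lemma:non-colliding}, the logarithm is well defined and injective. Hence $q_{t,\gbm\eta,i}=q_{t,\gbm\eta',j}$.

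Third, the same lemma makes all numbers in $\{q_{t,\gbm\eta,i}\}$ pairwise distinct over $\gbm\eta\in\mathcal{K}_{t-1}$ and $i=1,\ldots,d$. This forces $(\gbm\eta,i)=(\gbm\eta',j)$, so the two input pairs are identical.

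This is exactly injectivity of the supported map. Combining it with Lemma~\ref{lemma:injective_unitary}, with the tagged update $\tilde\tau$ from Lemma~\ref{lemma:spectral_preserve} in place of $\tau$, then yields an energy-conserving controlled unitary that implements the memory update.

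The only subtle point is bookkeeping. The domain must be restricted to supported pairs, i.e.\ those occurring with nonzero probability. Each input pair must also be understood as carrying its own tagged value $\tilde w_{t,\gbm\eta,i}$, as opposed to a work value shared across beliefs. I would state explicitly that the non-collision in Lemma~\ref{lemma:non-colliding} is required across \emph{different} beliefs at the same $t$, not merely within a single belief. Without this, two distinct beliefs could produce the same $w$ and the same posterior, and injectivity would fail. With it, the decoding $\tilde w\mapsto(\gbm\eta,i)$ is a well-defined function on the support, and the argument closes.
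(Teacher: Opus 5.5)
Your proposal is correct and follows essentially the same argument as the paper: equal outputs force equal retained tagged work values, and the pairwise distinctness from Lemma~\ref{lemma:non-colliding}, combined with the injectivity of the logarithm in Eq.~\eqref{eq:new_work}, decodes this to $(\gbm\eta,i)=(\gbm\eta',j)$. Your additional remarks, that $q$ is strictly positive and that non-collision must hold across different beliefs at the same $t$, make explicit what the paper's short proof leaves implicit.
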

\begin{proof}
    Assume two outputs for the above map are identical. The equality of their battery components gives $\tilde w_{t,\gbm\eta,i} = \tilde w_{t,\gbm\eta',j}$. By construction, all tagged work values are distinct, so $(\gbm\eta,i) = (\gbm\eta',j)$. The corresponding inputs are therefore the same. Hence, this shows injectivity.
\end{proof}
Combining Lemmas~\ref{lemma:injective_unitary} and \ref{lemma:tagging_injectivity}, every tagged memory update can be carried out using an energy-conserving unitary. 
\subsection{Work penalty for tagging}
Having established the injectivity of the update map, we must quantify the average work deficit incurred by introducing this tagging perturbation. 

Recall that the expected work extracted from the expected state $\xi$ using a protocol tailored for $\rho$ is given by
\begin{equation}
    \beta\langle W\rangle_\rho = \D(\xi\|\gamma)-\D(\xi\|\rho)~.
\end{equation}
Therefore, if the target state changes from $\rho_p$ to $\rho_q$ by tagging the eigenvalues $p_i\to q_i$, the difference in average work would be
\begin{equation}
\begin{split}
    \beta\langle W\rangle_{\rho_p} -\beta\langle W\rangle _{\rho_q} = \D(\xi\|\rho_q)-\D(\xi\|\rho_p)
\end{split}
\end{equation}
Since both $\rho_p$ and $\rho_q$ share the same eigenbasis, this expression can be further simplified to the classical Kullback-Leibler divergence:
\begin{equation}
    \begin{split}
        \beta\langle W\rangle_{\rho_p} -\beta\langle W\rangle _{\rho_q} &= \sum_{i=1}^d p_i\ln\frac{p_i}{q_i}\\
        &=\D(p\|q)~,
    \end{split}
\end{equation}
where $p_i=\bra{\psi_i}\xi\ket{\psi_i}$. From Eq.~\eqref{eq:perturb_spec}, we know $q_i \geq (1-\alpha)p_i$, and hence for every $p_i>0$,
\begin{equation}
    \ln\frac{p_i}{q_i} \leq -\ln(1-\alpha)~.
\end{equation}
Therefore the extra dissipation is given by 
\begin{equation}
    \beta\langle W\rangle_{\rho_p} -\beta\langle W\rangle _{\rho_q} \leq -[\ln(1-\alpha)]~.
\end{equation}
We can now formally state the primary operational theorem.
\begin{theorem}
Consider a finite-horizon policy defined over a finite set of belief states and finite system dimensions, operating with a degenerate memory and retained ideal battery records. For any desired total work tolerance $\Delta>0$, there exists a tagged policy such that:
\begin{enumerate}
    \item It uses the same measurement basis as the original policy at all time steps and beliefs.
    \item The induced branch probabilities, Bayesian beliefs, and future basis choices match the untagged policy perfectly.
    \item Every memory update can be implemented via a unitary on the memory and current battery at zero additional thermodynamic cost.
    \item The memory can be returned to its initial state by sequentially applying the inverse update unitaries in reverse temporal order.
    \item Its cumulative expected work extraction is at most $\Delta$ below the theoretical optimum of the original policy.
\end{enumerate}
\end{theorem}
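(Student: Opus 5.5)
The plan is to assemble the theorem from the lemmas of this appendix, with the tagging strength $\alpha$ fixed by a single choice that depends on $\Delta$ and $L$. Fix the finite-horizon policy $\Lambda$ with bases $\{\ket{\psi_{t,\gbm\eta,i}}\}$ on the finite belief sets $\mathcal{K}_{t-1}$. If the unperturbed work values of some $\rho_{t,\gbm\eta}$ collide, first apply the perturbation of Appendix~\ref{sec:local_perturb}; its cost is $\mathcal{O}(\epsilon^2)$, or linear in $\epsilon$ on unoccupied levels. We budget $\Delta/2$ for this step by choosing $\epsilon$ small enough at each of the finitely many pairs $(t,\gbm\eta)$. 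This puts us in the hypothesis of Lemma~\ref{lemma:spectral_preserve}.

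Next, choose $\alpha\in(0,1)$ with $-L\ln(1-\alpha)\le \beta\Delta/2$, i.e. $\alpha = 1-e^{-\beta\Delta/(2L)}$. For each $t$, use Lemma~\ref{lemma:non-colliding} to pick interior vectors $r_{t,\gbm\eta}$ so that all $q_{t,\gbm\eta,i}$ are distinct and positive. Then form the tagged targets $\tilde\rho_{t,\gbm\eta}$ and the tagged update $\tilde\tau$.

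The five items follow in order. Item 1 holds by construction, since the eigenbasis is unchanged. Item 2 is exactly Lemma~\ref{lemma:spectral_preserve}, whose inductive step over $t$ gives identical belief-trajectory distributions. Item 3 combines Lemma~\ref{lemma:tagging_injectivity} with Lemma~\ref{lemma:injective_unitary}: this gives an energy-conserving $U_t=\sum_w \ket{w}\!\bra{w}\otimes U_{t,w}$ that commutes with $\mathcal{H}_M+\mathcal{H}_{B_t}$, so no work is drawn and no heat is dissipated.

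For item 4, the battery records $\tilde w_1,\dots,\tilde w_L$ are retained and untouched by every $U_t$. We therefore apply $U_L^\dagger$ controlled on $B_L$, then $U_{L-1}^\dagger$, and so on. Each step inverts the corresponding update on the supported subspace because $U_t$ is unitary, so the memory returns to $\ket{\mu_0}$. Here one should note that $U_s$ for $s>t$ acts only on $M\otimes B_s$, so the reverse ordering suffices.

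For item 5, the per-step deficit conditioned on the belief is $\D(p\|q)\le -\ln(1-\alpha)$ by the bound derived just above. Because the belief distribution at every $t$ is unchanged by item 2, summing the conditional deficits over $t$ and taking expectations gives a total at most $-L\ln(1-\alpha)/\beta\le\Delta/2$. Adding the preliminary perturbation budget keeps the total within $\Delta$.

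The main obstacle is the interplay between the two perturbations. The Appendix~\ref{sec:local_perturb} perturbation changes the eigenvalues and must not spoil distinctness, and the tagging must be applied on top of the already-separated spectrum. I would handle this by noting that tagging only needs the branch-to-outcome decoding to be one-to-one, which the distinct $q$'s provide directly. The preliminary step is therefore needed only to make the untagged reference policy well-defined as an observation process. I would also check that its cost bound applies uniformly over the finitely many $(t,\gbm\eta)$, which finiteness guarantees.
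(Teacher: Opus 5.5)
Your proposal is correct and follows the paper's proof essentially line for line. You fix $\alpha$ from $\Delta$ and $L$, choose non-colliding interior tags via Lemma~\ref{lemma:non-colliding}, and read off items 1--5 from Lemmas~\ref{lemma:spectral_preserve}, \ref{lemma:injective_unitary} and \ref{lemma:tagging_injectivity}, unitarity with retained battery records, and the per-step bound $-\ln(1-\alpha)$. The paper differs only in that it spends the whole budget on tagging with $\alpha=1-e^{-\beta\Delta/L}$ and has no preliminary Appendix~\ref{sec:local_perturb} step, since its standing rank-1 assumption already supplies the hypothesis of Lemma~\ref{lemma:spectral_preserve}; your extra step and the $\Delta/2$ split are redundant, because the distinct $q$'s already separate all outcomes, but they are harmless.
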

\begin{proof}
    We choose 
    \begin{equation}
        \alpha = 1-\exp\frac{-\beta\Delta}{L}~.
    \end{equation}
    At every time step and belief, construct the tagged spectrum using Lemma~\ref{lemma:non-colliding}. Points 1 and 2 follow from Lemma~\ref{lemma:spectral_preserve}, while point 3 follows from Lemmas~\ref{lemma:injective_unitary} and~\ref{lemma:tagging_injectivity}. Point 4 follows from the unitarity of the update maps and the retained battery records, which allow the inverse updates to be applied in reverse temporal order.

    Finally, the total penalty in the average cumulative work extraction over $L$ steps is bounded by:
    \begin{equation}
        \mathcal{F}_{\text{TO}}^{(1:L)}-\E_{\mathbf{H}_{L}|\Lambda_{\text{tag}}}\sum_{t=1}^L(W_t) \leq Lk_BT[-\ln(1-\alpha)]=Lk_BT\frac{\beta\Delta}{L}= \Delta~,
    \end{equation}
    proving point 5.
\end{proof}

\subsection{Operational Justification}
\label{sec:operational}
To clarify how these sequential updates can be executed and eventually uncomputed without incurring a Landauer erasure cost, we construct an explicit operational model.

We first cast the $N$ discrete belief states as $N$ mutually orthogonal quantum states $\{\ket{i}_M\}_{i=1}^N$, each of which would be mapped to different points on the probability simplex. Crucially, the memory of the agent is assumed to be degenerate, i.e., its Hamiltonian is $\mathcal{H}_M=\mathds{1}_M$. At the beginning of the operation, the agent’s memory is initialized in a state. Here, without loss of generality, we assume that the agent's memory is initialized in the state $\ket{K_0}_M$. After time step $t$, the battery system would take the form of $\rho_{B_t} = \sum_{i}p_i\ket{w_i}\!\bra{w_i}$, which is then projected onto $\ket{w_i}\!\bra{w_i}$ after the measurement. The memory is then updated unitarily conditioned on the battery via a controlled operation in the form of
\begin{equation}
\label{eq:update unitary}
    V^{(t)}_{BM} = \sum _i \ket{w_i}\!\bra{w_i}_{B_t}\otimes U^{(i)}_M~,
\end{equation}
where $U^{(i)}$ updates the memory from $\ket{K_{t-1}}$ to $\ket{K_t}$, conditioned on the work value $w^{(i)}$ being observed, according to Bayesian inference in Eq.~\eqref{eq:bayesian_update}. Since the Hamiltonian of the memory is degenerate, this update does not require energy.

At the end of $L$ time steps, the joint state of the agent's memory and the batteries is given by 
\begin{equation}
\Psi_{final} = \sum_{\mathbf{i}}\Pr(\mathbf{i})\left(\bigotimes_{t=1}^L\ket{\tilde w(\mathbf{i})}\!\bra{\tilde w(\mathbf{i})}_{B_t}\right)\otimes \ket{K_L(\mathbf{i})}\!\bra{K_L(\mathbf{i})}_M~,
\end{equation}
where $\mathbf{i}=(i_1,\cdots, i_L)$ labels a supported trajectory.
 To ensure the cost-free reset, the agent does not have to keep track of all past work values; instead, it can simply couple its memory system $M$ back to the battery array $\{B_t\}_{t=1}^L$. It can then apply the inverse controlled unitaries $V_{BM}^{(t)\dagger}$ in reverse temporal order, going backwards from $t=L$ to $t=1$. 

For example, at $t=L$, the agent couples its memory in the state of $\ket{K_L}$ with battery $B_L$ and applies $V^{(L)}_{BM}$. Notice that the state of the battery will not change because the control operation acts on its eigenbasis, while the state of the memory will be changed to $\ket{K_{L-1}}_M$. By iteratively coupling the memory register $M$ and the battery $B_L, B_{L-1},\cdots B_1$, the agent can deterministically drive its memory backward along its trajectory until it resets to $\ket{K_0}_M$. Crucially, this operation does not require the agent to keep track of all past work values, since the agent already has access to the sequence of batteries, which encodes this information.


\end{document}